\documentclass[journal]{IEEEtran}

\usepackage{graphicx,color}

\usepackage{amsmath,amssymb,amsfonts,bm}
\usepackage{algorithmic}
\usepackage{mdwmath}
\usepackage{booktabs}

\usepackage{amssymb}
\usepackage{amsmath}
\usepackage{amsthm}
\usepackage{mathrsfs}
\usepackage{mathtools}

\usepackage{textcomp}
\usepackage{float}

\usepackage{soul,color}
\usepackage{cancel}
\usepackage{cite}

\usepackage{caption}
\usepackage[subrefformat=parens]{subcaption}
\captionsetup{compatibility=false}
\captionsetup[table]{justification=centering}
\captionsetup[figure]{justification=centering}

\theoremstyle{plain}
\newtheorem{thm}{Theorem}

\newtheorem{defn}[thm]{Definition}

\newcommand{\Figref}[1]{Fig.~\ref{#1}}
\newcommand{\Tabref}[1]{Table~\ref{#1}}

\newcommand{\norm}[1]{\left\lVert#1\right\rVert}
\newcommand{\normf}[1]{\left\lVert#1\right\rVert_F}

\newcommand{\Log}[1]{\operatorname{Log} #1 }
\newcommand{\Exp}{\operatorname{Exp}}

\newcommand{\tr}{\operatorname{tr} }
\newcommand{\argmin}{\operatorname{arg}\min }
\newcommand{\trans}{^{\operatorname{T}}}
\newcommand{\hermconj}{^{\mathsf{H}}}
\newcommand{\frechetderi}{\left.\frac{d}{d\varepsilon}\right|_{\varepsilon = 0}}
\newcommand{\riegrad}{\operatorname{grad}}


\newcommand{\inpror}[1]{\left\langle#1\right\rangle_{\bm{R}}}
\newcommand{\inprop}[1]{\left\langle#1\right\rangle_{\bm{P}}}
\newcommand{\inproi}[1]{\left\langle#1\right\rangle_{\bm{I}}}

\newcommand{\Lyapunov}[2]{ \mathcal{L}_{#1}\left[ #2 \right] }
\newcommand{\vect}{\operatorname{vec}}

%
\ifCLASSINFOpdf

\else

\fi


\begin{document}
%
\title{The Comparison of Riemannian Geometric Matrix-CFAR Signal Detectors}

\author{
        Yusuke~Ono,~\IEEEmembership{Member,~IEEE,} and 
        Linyu~Peng,~\IEEEmembership{Member,~IEEE}
\thanks{This work was partially supported by Japan Science and Technology Agency SPRING (No. JPMJSP2123), Japan Society for the Promotion of Science KAKENHI (No. JP20K14365), Japan Science and Technology Agency CREST (No. JPMJCR1914), and the Fukuzawa Fund and KLL of Keio University.   {\it (Corresponding author: Linyu Peng.)}

Y. Ono and L. Peng are with the Department of Mechanical Engineering, Keio University, Yokohama 223-8522, Japan (e-mail: yuu555yuu@keio.jp; l.peng@mech.keio.ac.jp).
}}

\maketitle


\begin{abstract}
    Essential characteristics of signal data can be captured by the autocovariance matrix, which, in the stationary scenarios, is Toeplitz Hermitian positive definite (HPD).
    In this paper, several well-known Riemannian geometric structures of HPD matrix manifolds are applied to signal detection, including the affine invariant Riemannian metric, the log-Euclidean metric, and the Bures--Wasserstein (BW) metric, the last of which was recently extended to HPD manifolds. 
    Riemannian gradient descent algorithms are proposed to solve the corresponding geometric means and medians, that play fundamental roles in the detection process. 
    Simulations within the scenario using the ideal steering vector as the target signal provide compelling evidence that the BW detectors outperform the other geometric detectors as well as the conventional adaptive matched filter and adaptive normalized matched filter when observation data are limited. 
    Further simulations demonstrate that the matrix-CFAR is robust in scenarios where the signal is mismatched.
    In addition to detection performances, robustness of the geometric detectors to outliers and computational complexity of the algorithms are analysed. 
\end{abstract}

\begin{IEEEkeywords}
    Bures--Wasserstein distance, Hermitian positive definite matrix, matrix-CFAR, Riemannian manifold.
\end{IEEEkeywords}

\section{INTRODUCTION}
In signal detection, the constant false alarm rate (CFAR) method has gained wide recognition as a conventional technique for suppressing clutter \cite{RadarSignalProcessing}.
The cell-averaged CFAR (CA-CFAR), known as a popular CFAR method, that employs the Fourier transform and utilises the Doppler spectral density for differentiating targets from clutter which arises from the backscattering of objects, however, tends to perform inadequately in the case of short waveforms with dense and non-homogeneous clutter \cite{weiss1982analysis}. 
To improve discrimination between targets and clutter, one statistical method is to test the likelihood ratio by using an estimator of the autocovariance matrix. 
A classical estimator of the autocovariance matrix is the sample covariance matrix (SCM), which is the maximum likelihood estimator \cite{SCMbyMLE}. 
However, it can be accurately estimated only when a sufficient number of observation data are independent and identically distributed \cite{4101326}.
In order to achieve efficient detection performance within non-homogeneous clutter while reducing the required number of observation data, several methods utilising a priori information have been proposed in \cite{5417154, LI20191, s21072391, 6853408, ctx11663242640004034}. 
Additionally, in \cite{5484507, 4154721}, the autocovariance matrix was estimated utilising the Bayesian methods. 
The strength of clutter varies depending on the condition of surroundings, transmitted frequency, and other factors \cite{ModernRadar}. 
Therefore, in practical implementations, the clutter environment can be range-dependent and non-stationary. In such situations, it is often challenging to obtain secondary data that are free from target components.
Accurately capturing the statistical property about clutter precedently can be difficult, and observation data are often contaminated. 
Insufficient information about clutter environment can make detection performance degraded in the methods using a priori information.

A novel clutter suppression method, the matrix-CFAR, was recently proposed and developed, which does not require any a priori information and leverages the autocovariance matrix of the signals \cite{BarInoTool, ETIVC2009}. 
This method operates under the premise that the autocovariance matrix, which characterizes the autocorrelation of observation data, is Hermitian positive-definite (HPD) and 
has been leveraged for detection problems \cite{7560351, DroneDetect}. 
Its detection efficiency has been demonstrated in the observation of wake eddy turbulence \cite{BARBARESCO201054}, Burg estimate methods of radar scatter matrices \cite{7842633}, detection of X-band radar clutter \cite{BarHFandXband}, and so on. 
In these applications, the affine invariant Riemannian metric (AIRM)  of the HPD manifolds was used. 
However, the computational cost of the AIRM is relatively high \cite{9817822}.
Recently, matrix-CFAR, based on various divergences has been proposed, which has shown improved detection performance and robustness to outliers in comparison with the AIRM \cite{MatrixCFARbyKL, CFARJensen,9764734}. 
In \cite{fixpointHua}, the comparison between the log-Euclidean (LE) metric and the AIRM was conducted and their detection performances were shown almost the same. 
Regarding the computational complexity that will be studied in Section \ref{sec: CC}, while the LE mean can be derived by closed form, the AIRM mean has to be obtained numerically.
As structure of HPD manifolds is of great importance in the matrix-CFAR, various geometric structures of HPD manifolds have been proposed \cite{ctx15813500480004034, THANWERDAS2022101867, THANWERDAS2023163}. 

The Bures--Wasserstein (BW) distance, also known as the Earth mover's distance or Kantorovich--Rubinstein distance, was proposed in the context of optimal transport \cite{vaserstein1969markov,kantorovich1960mathematical}. It was then extended to function spaces and probability spaces \cite{fournier2015rate, 1011371118101}.
The BW distance has been applied in various fields, including quantum information \cite{luo2004informational, 2021NJPh}, optimisation theory \cite{han2021riemannian, villani2009optimal, ModinKlas2017} and machine learning \cite{peyre2019computational, pmlr-v70-arjovsky17a}.
In particular, in the field of machine learning, the BW distance is referred to as the Wasserstein distance and serves as a measure of dissimilarity in the space of mean-zero Gaussian densities \cite{takatsu2011wasserstein}.
In quantum information theory, it, referred to as the Bures distance, is utilised to quantify the dissimilarity between quantum states or density matrices \cite{spehner2013geometric}. In the current paper, the BW metric of HPD manifolds is applied to signal detection and compared with other Riemannian geometric structures. 
The contributions of this study are summarised as follows. 
\begin{itemize}
\item[(1)] The BW metric is applied to signal detection, where the BW mean and median are used to catch essential statistical properties of the observation data. The corresponding optimisation problems for BW mean and median are solved by Riemannian gradient descent algorithms; their computational complexity is studied and compared with other numerical algorithms.  
\item[(2)] Numerical detection performance of the BW detectors is compared with that of the AIRM, LE metric, adaptive matched filter (AMF) and adaptive normalized matched filter (ANMF) detectors as follows.
In one scenario, the ideal steering vector is regarded as the target signal, while in the other, we consider the presence of steering signal mismatches.
In the first scenario, the BW detectors outperform the other methods when observation data are limited.
To address the limitations of AMF, we propose AMFs by replacing the SCM with the Riemannian geometric means or medians.
 The relation between detection performance and the normalized Doppler frequency of target is further studied numerically.
In the second scenario, it is shown that the matrix-CFAR is more robust with respect to mismatched signals compared to the AMF and ANMF.
\item[(3)] Robustness of the detectors about outliers is analysed by the corresponding exact influence functions defined in \cite{onoTBDmedian}. 
\end{itemize}

This paper is structured as follows. In Section \ref{sec: mat}, the detection setting and theory are summarised. 
In Section \ref{sec: HPD}, after recalling the Riemannian geometry of HPD manifolds, such as the AIRM, the LE metric, and the BW metric, 
we introduce Riemannian geometric means and medians, as well as Riemannian gradient descent algorithms for obtaining them.
Besides, their computational complexity is investigated. 
Numerical detection performance of the BW detectors is conducted in Section \ref{sec: DP} compared with those using the AIRM and LE metric, and the AMF and ANMF. 
In Section \ref{sec: RA}, robustness analysis is conducted via the influence functions for all means and medians studied in this paper. 
We conclude and address future research lines finally in Section \ref{sec: con}.

\section{THE DETECTION SETTING AND THEORY}
\label{sec: mat}
In this section, we introduce the problem formulation of signal detection and then recall the conventional detection methods and the matrix-CFAR.
In this study, it is assumed that we detect the observation data from $N$ stationary channels. 
The detection problem will be modelled as a statistical hypothesis test: the null hypothesis is denoted as $H_0$ and represents the complex data $\bm{x}$ includes only clutter $\bm{c}$. On the other hand, the alternative hypothesis is denoted as $H_1$ and represents $\bm{x}$ includes both clutter $\bm{c}$ and the target $\bm{s}$, i.e.,
\begin{eqnarray}\label{eq: H01}
    \begin{cases}
        H_0:
        \begin{cases}
			\mathrm{target\ is\ absent,} \\
            \bm{x} = \bm{c}, \\
        \end{cases}  
            \vspace{0.2cm}\\
        H_1:
        \begin{cases}
			\mathrm{target\ is\ present,} \\
            \bm{x} = a  \bm{s} + \bm{c}. \\
        \end{cases}  \\
    \end{cases}
\end{eqnarray}
Here $a$ denotes the unknown amplitude coefficient. 
The observation data $\bm{x}$ is given by 
\begin{equation} \label{eq: received_signal}
    \bm{x} = \left(x_0, \ldots, x_{N-1}\right)^{\operatorname{T}}\in\mathbb{C}^N, \\
\end{equation}
where $\mathbb{C}^N$ is the $N$-dimensional complex space and $(\cdot)^{\operatorname{T}}$ represents the transpose of vectors or matrices.
The SCM of observation data $\left\{ \bm{x}_{i} \right\}_{i=1}^m$ is derived by \cite{SCMbyMLE} 
\begin{equation}
    \label{eq: SCM}
	\bm{R}_{\mathrm{SCM}} = \frac1m \sum_{i=1}^m  \bm{x}_i \bm{x}_i\hermconj,\ \  \bm{x}_i \in \mathbb{C}^N .
\end{equation}
Here $(\cdot)\hermconj$ denotes the conjugate transpose of vectors or matrices. It has been commonly used; 
in signal processing, the AMF reads \cite{GLRT}
\begin{equation}
    \frac{\left| \bm{x}\hermconj \bm{R}^{-1}_{\mathrm{SCM}} \bm{s} \right|^2}{\bm{s}\hermconj\bm{R}^{-1}_{\mathrm{SCM}} \bm{s} } \overset{H_1}{\underset{H_0}{\gtrless}} \gamma 
\end{equation}
and the ANMF reads \cite{381910}
\begin{equation}
    \frac{\left| \bm{x}\hermconj \bm{R}^{-1}_{\mathrm{SCM}} \bm{s} \right|^2}{\left( \bm{x}\hermconj\bm{R}^{-1}_{\mathrm{SCM}} \bm{x} \right)\left( \bm{s}\hermconj\bm{R}^{-1}_{\mathrm{SCM}} \bm{s} \right)} \overset{H_1}{\underset{H_0}{\gtrless}} \gamma, 
\end{equation}
where $\gamma$ denotes the detection threshold. 

In the matrix-CFAR, to decide whether a target $\bm{s}$ is present within the cell under test (CUT) or not, the autocovariance matrix of the observation data can be modelled by a Toeplitz HPD matrix: \cite{6514112,40004034} 
\begin{equation}\label{eq: covariancematrix}
	\begin{aligned}
	\hspace{-7mm}\bm{R} = 
	    \begin{pmatrix}
		    r_0 & \cdots & \overline{r}_k & \cdots & \overline{r}_{N-1} \\
		    \vdots & \ddots & \ddots & \ddots & \vdots \\
		    r_k & \ddots & r_0 & \ddots & \overline{r}_k \\
		    \vdots & \ddots & \ddots & \ddots & \vdots \\
		    r_{N-1} & \cdots & r_k & \cdots & r_0 \\
	    \end{pmatrix},
	\end{aligned}
\end{equation}
where $ 0 \le k \le N-1$. 
Its component is the $k$-th autocorrelation function given by
\begin{equation}
		r_k = \mathbb{E} \left[x_l \overline{x}_{l+k} \right], \ \ 0 \le l \le N-k-1.
\end{equation}
Here $\mathbb{E}[\cdot]$ is the statistical expectation, and $\overline{r}_{k}$ is the conjugate of $r_{k}$. 
It is known that the Toeplitz structure leads to the improvement in an estimate of the autocovariance matrix \cite{ABY2012,9054969,987665}.
The $k$-th autocorrelation function $r_{k}$ can be estimated by the observation data as
\begin{eqnarray}
	\label{eq: autoregressive}
		r_k = \dfrac{1}{N} \sum\limits_{l = 0} ^{N-k-1} x_l \overline{x}_{l+k} ,\ \ 0 \le k \le N-1. 
\end{eqnarray}
 Equation \eqref{eq: autoregressive} to estimate autocorrelation coefficients is not robust, and under assumption of local stationarity of signal, these autocorrelation coefficients could be estimated by regularised Burg algorithm \cite{7083263}.  
An estimator of the clutter autocovariance matrix is derived as $\bm{R}_{g}$ by autocovariance matrices of the observation data $\{ \bm{R}_{i}\}_{i=1}^m$. 
As a consequence, the detection problem \eqref{eq: H01} can be formulated as 
\begin{eqnarray}
    \begin{cases}
        H_0:\bm{R}_{\mathrm{CUT}} = \bm{R}_{g}, \vspace{0.2cm} \\
        H_1:\bm{R}_{\mathrm{CUT}} \neq \bm{R}_{g}. \\
    \end{cases}
\end{eqnarray}
Here the matrix $\bm{R}_{\mathrm{CUT}}$ is the autocovariance matrix within the CUT. 
Considering a threshold $\gamma$, we can approach the detection problem as the task of distinguishing $\bm{R}_{\mathrm{CUT}}$ from $\bm{R}_{g}$ by 
\begin{eqnarray}\label{eq:decison_of_R_g}
    d\left( \bm{R}_g, \bm{R}_{\mathrm{CUT}}\right) \overset{H_1}{\underset{H_0}{\gtrless}} \gamma
\end{eqnarray}
where $d\left( \bm{R}_{g}, \bm{R}_{\mathrm{CUT}}\right)$ is either a geodesic distance or a divergence. In the current study, we focus on geodesic distances only, which will be introduced in the following section. 
\Figref{fig: processMCFAR} shows the process of the matrix-CFAR which is described as follows. Firstly we compute autocovariance matrices $\{ \bm{R}_i\}_{i=1}^m$ of the observation data $\{ \bm{x}_i\}_{i=1}^m$ using \eqref{eq: covariancematrix} and \eqref{eq: autoregressive}. 
Riemannian geometric mean or median $\bm{R}_g$ of $\{ \bm{R}_{i}\}_{i=1}^m$ can be derived as the solutions of optimisation problems about the geodesic distances. 
The matrix $\bm{R}_g$ can be regarded as an estimator of the clutter autocovariance matrix since almost all observation data $\{ \bm{x}_{i}\}_{i=1}^m$ are considered as clutter, that is of the essence. 
The detection decision is completed by comparing the distance of $\bm{R}_{\mathrm{CUT}}$ and $\bm{R}_g$ with the threshold $\gamma$. 

\begin{figure}[htbp]
	\centering
	\includegraphics[width = \linewidth]{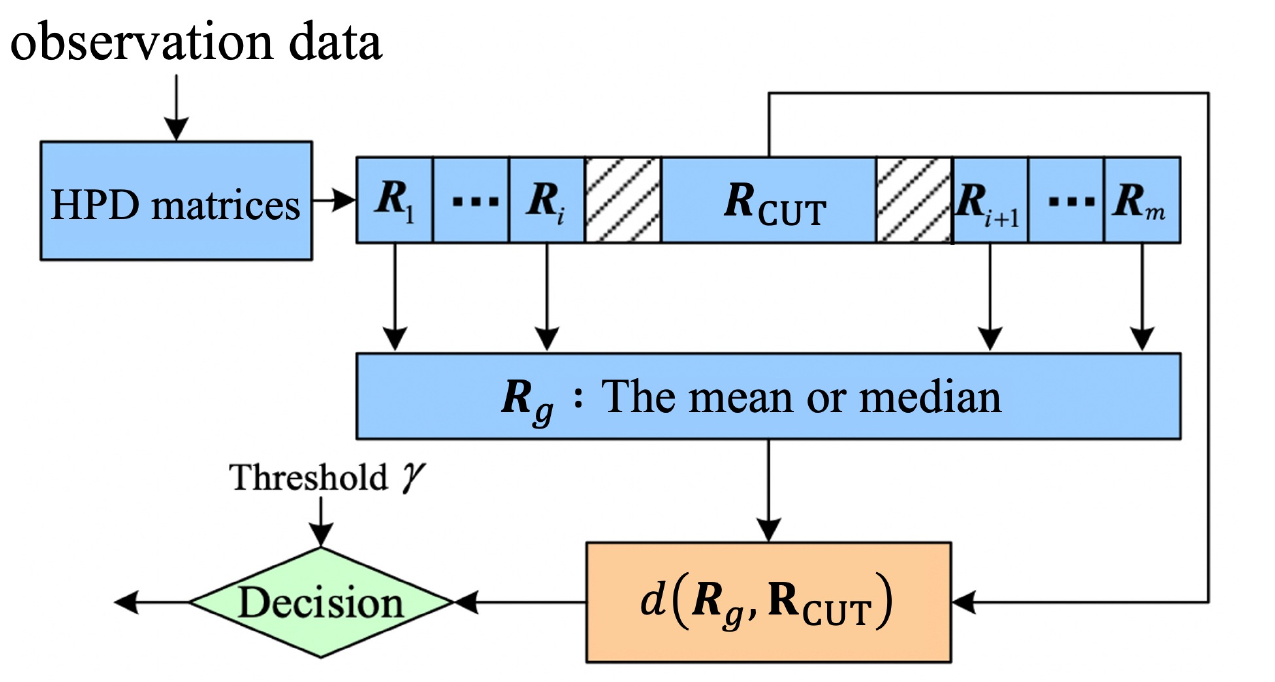}
	\caption{The process of matrix-CFAR.}\label{fig: processMCFAR}
\end{figure}

\section{THE GEOMETRY OF HPD MATRIX MANIFOLDS}
\label{sec: HPD}
In this section, we review Riemannian geometry of HPD manifolds equipped with the AIRM, the LE metric, and the BW metric. 
The corresponding Riemannian geometric means and medians are defined and solved numerically by Riemannian gradient descent algorithms.

\subsection{The AIRM}
\label{sec: AIRM}
The $N$-dimensional HPD matrix manifolds are represented by $ \mathscr{P}(N,\mathbb{C}) $ and
the tangent space at a point  $\bm{P}\in \mathscr{P}(N,\mathbb{C})$ is the space of $N$-dimensional Hermitian matrices denoted by $T_{\bm{P}}\mathscr{P}(N,\mathbb{C}) \cong \mathscr{H}(N,\mathbb{C})$. 
The tangent bundle is represented by $T\mathscr{P}(N,\mathbb{C})=\cup_{\bm{P}}T_{\bm{P}}\mathscr{P}(N,\mathbb{C})$.
The AIRM is defined by
\begin{equation}\label{eq: AIRM}
\begin{aligned}
\langle \bm{A},\bm{B} \rangle_{\bm{P}} := \tr \left(\bm{P}^{-1}\bm{A}\bm{P}^{-1}\bm{B}\right) 
\end{aligned}
\end{equation}
for  $ \bm{A},\bm{B}\in T_{\bm{P}}\mathscr{P}(N,\mathbb{C})$.
The geodesic between two points $\bm{P}_1, \bm{P}_2 \in \mathscr{P}(N,\mathbb{C}) $ is given by
\begin{equation}\label{eq:geocurve}
    \begin{aligned}
        \bm{P}(t) 
        & = \bm{P}_1^{\frac12} \left( \bm{P}_1^{-\frac12} \bm{P}_2 \bm{P}_1^{-\frac12} \right)^t \bm{P}_1^{\frac12},\ t\in[0,1],       
    \end{aligned}
\end{equation}
and its tangent vector at the point $\bm{P}_1$ is 
\begin{equation}
    \dot{\bm{P}} (0) = \bm{P}_1^{\frac12} \Log \left( \bm{P}_1^{-\frac12} \bm{P}_2 \bm{P}_1^{-\frac12} \right) \bm{P}_1^{\frac12}. \\ 
\end{equation}
Under the AIRM, the geodesic distance connecting two points $\bm{P}_1, \bm{P}_2 \in \mathscr{P}(N,\mathbb{C}) $ is \cite{RiegeoMoa}
\begin{equation}
    \begin{aligned}\label{eq:Riedis}
		d_{\text{AIRM}}\left( \bm{P}_1, \bm{P}_2 \right) = \normf{ \Log \left( \bm{P}_1^{-\frac12} \bm{P}_2 \bm{P}_1^{-\frac12} \right) },
    \end{aligned}
\end{equation}
where $\normf{\bm{A}}:=\sqrt{\tr \left( \bm{A} \bm{A}\hermconj \right) }$ is the Frobenius norm. 
The exponential map in the Riemannian manifold $\mathscr{P}(N,\mathbb{C})$ denotes $\Exp: T\mathscr{P}(N,\mathbb{C})  \rightarrow \mathscr{P}(N,\mathbb{C})$ defined by using geodesics. 
In case of the AIRM, considering $\bm{P}(t)$ as the geodesic \eqref{eq:geocurve} between two arbitrary HPD matrices $\bm{P}_1=\bm{P}(0),\bm{P}_2=\bm{P}(1)$, the exponential map leads to the endpoint $\bm{P}_2$ of $\bm{P}(t)$, namely 
\begin{equation}\label{eq:ExpAIRM}
    \begin{aligned}
        \Exp \left( \bm{P}_1, \dot{\bm{P}}(0) \right) 
        & = \bm{P}_1^{\frac12} \exp \left( \bm{P}_1^{-\frac12} \dot{\bm{P}}(0) \bm{P}_1^{-\frac12}\right) \bm{P}_1^{\frac12}.\\
    \end{aligned}
\end{equation}
At the point $\bm{P}_1$, the logarithm map $\Log : \mathscr{P}(N,\mathbb{C})\times \mathscr{P}(N,\mathbb{C}) \rightarrow \mathscr{H}(N,\mathbb{C})$ is the inverse of the exponential map, i.e.,
\begin{equation}\label{eq: LogAIRM}
    \begin{aligned}
        \Log \left( \bm{P}_1, \bm{P}_2 \right) 
		& = \bm{P}_1^{\frac12} \Log \left( \bm{P}_1^{-\frac12} \bm{P}_2 \bm{P}_1^{-\frac12} \right) \bm{P}_1^{\frac12}. 
    \end{aligned}
\end{equation}
The midpoint/mean between $\bm{P}_1$ and $\bm{P}_2$ is denoted by $\bm{P}_1 \# \bm{P}_2 $ and given by the Pusz--Woronowicz formula \cite{PUSZ1975159}
\begin{equation}\label{eq:Riemid}
    \bm{P}_1 \# \bm{P}_2 = \bm{P}\left(t = \frac{1}{2}\right) = \bm{P}_1^{\frac12} \left( \bm{P}_1^{-\frac12} \bm{P}_2 \bm{P}_1^{-\frac12} \right)^{\frac{1}{2}} \bm{P}_1^{\frac12}.
\end{equation}
This mean is symmetric with respect to $\bm{P}_1$ and $\bm{P}_2$ and is the only solution of the Riccati equation \cite{BHATIA2019165}
\begin{equation}
    \bm{X} \bm{P}_1^{-1} \bm{X} = \bm{P}_2.
\end{equation}

\subsection{The LE Metric}
\label{sec: LEM}
The LE metric has been applied to, for instance, the tensor filed and image data \cite{arsigny2006log, AFPA2007, huang2015log}.
In radar detection, it has been utilised for the outlier rejection \cite{6573681, 6825699, 10057028, 10168102}. 
Some important structures and properties of  HPD manifolds under the LE metric are summarised in \Figref{fig: LEM}.

\begin{figure*}[htbp]
    \centering
    \includegraphics[width = \linewidth]{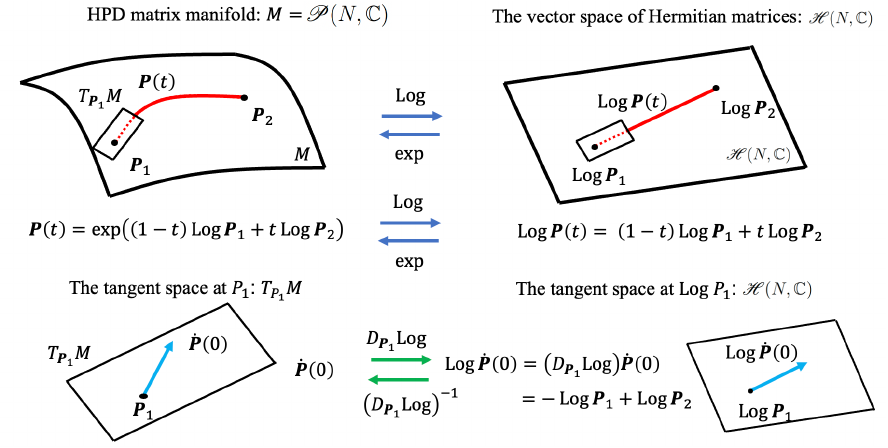}
    \caption{The HPD manifold under the LE metric.}
    \label{fig: LEM}
\end{figure*}

The LE metric is derived by leveraging the Lie group structure of $\mathscr{P}(N,\mathbb{C})$ under the group operation 
\begin{equation}
    \bm{P}_1 \odot \bm{P}_2 := \exp \left( \Log \bm{P}_1 + \Log \bm{P}_2  \right)
\end{equation}
for $\forall \bm{P}_1, \bm{P}_2 \in \mathscr{P}(N,\mathbb{C})$, 
where $\exp$ and $\operatorname{Log}$ denote the matrix exponential and logarithm; the latter is unique for HPD matrices, and should not be confused with the logarithmic map \eqref{eq: LogAIRM}.
The LE metric of two elements $\bm{V}_1, \bm{V}_2$ in the tangent space at a point $\bm{P}$ is defined as
\begin{equation}\label{eq:LEM}
    \inprop{\bm{V}_1, \bm{V}_2} = \inproi{ \left( D_{\bm{P}} \Log\right) \bm{V}_1, \left( D_{\bm{P}} \Log\right) \bm{V}_2},
\end{equation}
where $ \left( D_{\bm{P}} \Log\right) \bm{V}_1$ is the directional derivative of the matrix logarithm at $\bm{P}$ along $\bm{V}$ and $\bm{I}$ denotes the $N$-dimensional identity matrix.
The geodesic between $ \bm{P}_1$ and $ \bm{P}_2 $ is given by 
\begin{equation}
    \bm{P}(t) = \exp \left( (1-t) \Log \bm{P}_1 + t \Log \bm{P}_2 \right).
\end{equation}
The exponential and logarithm maps for the LE metric are respectively given by  
\begin{eqnarray}
    \Exp \left( \bm{P}_1, \dot{\bm{P}}(0) \right) = \exp \left( \Log \bm{P}_1 + \left( D_{\bm{P}} \Log\right) \dot{\bm{P}}(0) \right), \label{eq: ExpLEM} \\
    \Log \left( \bm{P}_1, \bm{P}_2 \right) = \left( D_{\Log \bm{P}_1} \exp\right) \left( \Log \bm{P}_2 - \Log \bm{P}_1 \right), \label{eq: LogLEM}
\end{eqnarray}
where
\begin{equation}
\dot{\bm{P}}(0)=\exp\left(\Log \bm{P}_2-\Log \bm{P}_1\right).
\end{equation}
Using \eqref{eq:LEM}, \eqref{eq: ExpLEM} and \eqref{eq: LogLEM}, the geodesic distance between $\bm{P}_1$ and $ \bm{P}_2$ is derived as 
\begin{equation*}\label{eq: LEdis}
    \begin{aligned}
        d^2_{\text{LE}}(\bm{P}_1,\bm{P}_2) & = \left\langle \left( D_{\Log \bm{P}_1} \exp\right) \left( \Log \bm{P}_2 - \Log \bm{P}_1 \right), \right.\\ 
        & ~~~~~~~ \left. \left( D_{\Log \bm{P}_1} \exp\right) \left( \Log \bm{P}_2 - \Log \bm{P}_1 \right) \right\rangle_{\bm{P}_1} \\ 
        & = \inproi{\Log \bm{P}_2 - \Log \bm{P}_1 , \Log \bm{P}_2 - \Log \bm{P}_1 } \\
        & = \normf{\Log\bm{P}_2 - \Log\bm{P}_1}^2. \\
    \end{aligned}
\end{equation*}

\subsection{The BW Metric}
\label{sec: BWM}
Here we introduce the BW metric on $\mathscr{P}(N,\mathbb{C})$, and the BW distance between HPD matrices. 
The BW metric at a point $\bm{P}$ is defined by \cite{BHATIA2019165,BW_vOJ}
\begin{equation} \label{eq: BWmetric}
    \langle \bm{A},\bm{B} \rangle_{\bm{P}} := \frac{1}{2} \tr \left(\Lyapunov{\bm{P}}{\bm{A}} \bm{B}\right) 
\end{equation}
for  $ \bm{A},\bm{B}\in T_{\bm{P}}\mathscr{P}(N,\mathbb{C})$, where $\mathcal{L}_{\bm{P}}$ is called the Lyapunov operator and $\mathcal{L}_{\bm{P}}[\bm{A}]$ is the solution of the Lyapunov equation 
\begin{equation}
    \bm{P} \Lyapunov{\bm{P}}{\bm{A}} + \Lyapunov{\bm{P}}{\bm{A}}\bm{P} = \bm{A} .    
\end{equation}
The geodesic between two points $\bm{P}_1, \bm{P}_2 \in \mathscr{P}(N,\mathbb{C}) $ is derived by \cite{BHATIA2019165,BW_vOJ}
\begin{equation*}\label{eq: geocurveBW}
    \begin{aligned}
        \bm{P}(t) 
        & =  \left( 1 - t \right)^2 \bm{P}_1 + t^2 \bm{P}_2 \\
        & ~~~~~ + t\left(1-t\right) \left(\bm{P}_1 \left( \bm{P}_1^{-1} \# \bm{P}_2 \right) + \left( \bm{P}_1^{-1} \# \bm{P}_2 \right) \bm{P}_1 \right), \\
    \end{aligned}
\end{equation*}
and we have
\begin{equation}
    \dot{\bm{P}}(0) = \bm{P}_1 \left( \bm{P}_1^{-1} \# \bm{P}_2 - \bm{I} \right) + \left( \bm{P}_1^{-1} \# \bm{P}_2 - \bm{I} \right) \bm{P}_1 .
\end{equation}
The exponential map is
\begin{equation}
    \begin{aligned}
        & \Exp \left(\bm{P}_1, \dot{\bm{P}}(0)\right)  \\
        & ~~~ = \left(\bm{I} + \Lyapunov{\bm{P}_1}{\dot{\bm{P}}(0)} \right)\bm{P}_1 \left(\bm{I} + \Lyapunov{\bm{P}_1}{\dot{\bm{P}}(0)} \right), \label{eq: ExpBW}
    \end{aligned}
 \end{equation}
while the logarithm map is
\begin{equation*}\label{eq: LogBW}
        \Log \left(\bm{P}_1, \bm{P}_2\right) 
        = \bm{P}_1 \left( \bm{P}_1^{-1} \# \bm{P}_2 - \bm{I} \right) + \left( \bm{P}_1^{-1} \# \bm{P}_2 - \bm{I} \right) \bm{P}_1 .
\end{equation*}
For the BW metric, the geodesic distance, called the BW distance, connecting two arbitrary points $\bm{P}_1, \bm{P}_2 $ is
\begin{equation*}\label{eq: BWdis}
    d^2_{\text{BW}}\left(\bm{P}_1,\bm{P}_2\right) = \tr \left(\bm{P}_1\right) + \tr \left(\bm{P}_2\right) - 2\tr \left(\left( \bm{P}_1^{\frac12} \bm{P}_2 \bm{P}_1^{\frac12}\right)^{\frac12}\right).
\end{equation*}
 The BW geodesic distance connecting two arbitrary points has been introduced by Fr\'echet in \cite{frechet1957} (see also \cite{dowson1982frechet}).

\subsection{Riemannian Geometric Means and Medians}
Based on Riemannian geometry introduced in Subsections \ref{sec: AIRM}, \ref{sec: LEM} and \ref{sec: BWM}, the corresponding geometric means and medians of HPD matrices can be defined. 
\begin{defn}
    The Riemannian geometric mean of  the HPD matrices $\{ \bm{R}_i\}_{i=1}^m$ is defined by
    \begin{equation}\label{eq:geoMeanDef}
		\underset{ \bm{R} \in \mathscr{P}(N,\mathbb{C}) } \argmin \dfrac{1}{m}\sum\limits_{i=1}^{m} d^2(\bm{R}_i, \bm{R}),
    \end{equation}
    where the function $d$ is a geodesic distance on $\mathscr{P}(N,\mathbb{C})$. 
    The geometric median of the HPD matrices $\{ \bm{R}_i\}_{i=1}^m$ is defined by
    \begin{equation}\label{eq:geoMedianDef}
        \underset{ \bm{R} \in \mathscr{P}(N,\mathbb{C} )} \argmin \frac{1}{m} \sum_{i=1}^{m} d\left(\bm{R}_i,\bm{R} \right).
    \end{equation}
\end{defn}

For example, when the function $d$ is the Euclidean distance, the Riemannian (actually Euclidean) geometric mean is given by  
\begin{equation}
    \underset{ \bm{R} \in \mathscr{P}(N,\mathbb{C} )} \argmin \dfrac{1}{m}\sum\limits_{i=1}^{m} \normf{\bm{R}-\bm{R}_i}^2,
\end{equation}
which yields
\begin{equation}\label{eq:arith}
    \dfrac{1}{m}\sum\limits_{i=1}^{m} \bm{R}_i.
\end{equation}
This is exactly the arithmetic mean.


 Generally, it can be quite challenging to show the uniqueness and existence of  geometric means and medians.  In \cite{yang2010riemannian}, the geometric median for a probability measure on a Riemannian manifold
was defined, with a natural condition to ensure its uniqueness.
The uniqueness and existence of the Riemannian geometric means under the AIRM and the LE metric were proved analytically (see, for instance, \cite{Moa2005, AFPA2007}).
The BW mean is unique and is the solution of equation \eqref{eq: BWmean_solution} below \cite{BHATIA2019165, chewi2020gradient}. 
%
The uniqueness and existence of the Riemannian geometric medians are related to the curvature of the space. 
Non-positive curvature leads to the unique median \cite{fletcher2009geometric}.
For the LE metric, the curvature is zero \cite{AFPA2007}, and hence, the LE median exists uniquely.
However, the curvatures with respect to the AIRM and the BW metric do not satisfy the non-positiveness property, that makes an analytic proof difficult.  
 Under the BW metric, specifically, the curvature is nonnegative and then uniqueness and existence of barycenter cannot be proved mathematically. 
In consequence, in some cases, barycenter computation could provide aberrant values.

In many cases, Riemannian geometrical means and medians cannot be obtained analytically and one often resorts to numerical methods such as the Riemannian gradient descent algorithm and the fixed-point algorithm \cite{6556702, Robuststatistics, Smith1993, Udr1994}.
\vspace{0.2cm}

{\bf AIRM mean and median.} 
\begin{thm}\label{prop: AIRMm}
    The AIRM mean and median of the HPD matrices $\{ \bm{R}_{i}\}_{i=1}^m$ can respectively be derived by the Riemannian gradient descent algorithms as follows \cite{6556702, Robuststatistics}
    \begin{equation}\label{eq: gradAIRMmean}
            \bm{R}_{t+1} 
            = \bm{R}_{t}^{\frac12} \exp \left( - \frac{2 \eta_t}{m} \sum_{i=1}^{m} \Log \left( \bm{R}_{t}^{\frac12} \bm{R}_{i}^{-1} \bm{R}_{t}^{\frac12}\right) \right) \bm{R}_{t}^{\frac12},
    \end{equation}
    and
    \begin{equation}\label{eq: gradAIRMmedian}
        \begin{aligned}
            \bm{R}_{t+1} 
            & = \bm{R}_{t}^{\frac12} \exp \left(- \frac{\eta_t}{m} \sum_{i=1}^{m} \frac{ \Log \left( \bm{R}_{t}^{\frac12} \bm{R}_{i}^{-1} \bm{R}_{t}^{\frac12}\right) }{ d_{\text{AIRM}}\left(\bm{R}_t ,\bm{R}_i \right) } \right) \bm{R}_{t}^{\frac12},
        \end{aligned}
    \end{equation}
    where  $\eta_t$ is the step size.
\end{thm}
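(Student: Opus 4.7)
The plan is to derive both update formulas as instances of Riemannian gradient descent on $\mathscr{P}(N,\mathbb{C})$ equipped with the AIRM, namely
$$\bm{R}_{t+1} = \Expm{\bm{R}_t}{-\eta_t \riegrad F(\bm{R}_t)},$$
where $F$ is the objective in \eqref{eq:geoMeanDef} for the mean and in \eqref{eq:geoMedianDef} for the median. So the main tasks are: (i) compute the Riemannian gradient of each objective, (ii) substitute into the AIRM exponential map \eqref{eq:ExpAIRM}, and (iii) simplify using properties of the matrix logarithm.

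First I would recall the standard first-variation identity for the squared distance on a Riemannian manifold: for fixed $\bm{R}_i$, the map $\bm{R}\mapsto \tfrac{1}{2}d_{\text{AIRM}}^2(\bm{R}_i,\bm{R})$ has Riemannian gradient $-\Logm{\bm{R}}{\bm{R}_i}$ at $\bm{R}$. This is where the main conceptual work sits; I would verify it by differentiating the distance along a geodesic emanating from $\bm{R}$, using that the length functional is stationary for a minimising geodesic and that the initial velocity points along $\Logm{\bm{R}}{\bm{R}_i}/d_{\text{AIRM}}(\bm{R},\bm{R}_i)$. Consequently, for the sample squared-distance objective,
$$\riegrad F_{\text{mean}}(\bm{R}) = -\frac{2}{m}\sum_{i=1}^{m}\Logm{\bm{R}}{\bm{R}_i},$$
and for the sample distance objective,
$$\riegrad F_{\text{med}}(\bm{R}) = -\frac{1}{m}\sum_{i=1}^{m}\frac{\Logm{\bm{R}}{\bm{R}_i}}{d_{\text{AIRM}}(\bm{R},\bm{R}_i)},$$
valid wherever $\bm{R}\neq\bm{R}_i$ for all $i$ (a standard subtlety of the median objective).

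Next I would insert these gradients into the AIRM exponential map. Using \eqref{eq: LogAIRM}, $\Logm{\bm{R}_t}{\bm{R}_i} = \bm{R}_t^{1/2}\Log\bigl(\bm{R}_t^{-1/2}\bm{R}_i\bm{R}_t^{-1/2}\bigr)\bm{R}_t^{1/2}$, and from \eqref{eq:ExpAIRM},
$$\Expm{\bm{R}_t}{\bm{V}} = \bm{R}_t^{1/2}\exp\!\bigl(\bm{R}_t^{-1/2}\bm{V}\bm{R}_t^{-1/2}\bigr)\bm{R}_t^{1/2}.$$
The conjugation by $\bm{R}_t^{\pm 1/2}$ cancels cleanly inside the exponential, yielding a sum of ordinary matrix logarithms of congruences of the $\bm{R}_i$. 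Finally, I would use the identity $\Log(\bm{A}^{-1}) = -\Log \bm{A}$ for HPD $\bm{A}$, together with the congruence identity $\bigl(\bm{R}_t^{1/2}\bm{R}_i^{-1}\bm{R}_t^{1/2}\bigr)^{-1} = \bm{R}_t^{-1/2}\bm{R}_i\bm{R}_t^{-1/2}$, to flip each logarithm and absorb the minus sign from $-\eta_t$ into the argument. This produces exactly \eqref{eq: gradAIRMmean} and \eqref{eq: gradAIRMmedian}.

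The main obstacle is the gradient-of-distance calculation: the clean formula $\riegrad[\tfrac{1}{2}d^2(\bm{R}_i,\cdot)] = -\Log_{\bm{R}}\bm{R}_i$ requires either a first-variation argument in the tangent space or direct differentiation of the explicit distance \eqref{eq:Riedis}, the latter demanding careful handling of the Fr\'echet derivative of $\Log$ and of the compatibility between the AIRM inner product \eqref{eq: AIRM} and the metric tensor. Everything else reduces to algebraic simplification inside $\Exp$ and the $\Log(\bm{A}^{-1})=-\Log\bm{A}$ identity, which are routine once the gradient is in hand.
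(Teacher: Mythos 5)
Your derivation is correct: the first-variation identity $\riegrad\bigl[\tfrac{1}{2}d_{\text{AIRM}}^2(\bm{R}_i,\cdot)\bigr](\bm{R}) = -\Log(\bm{R},\bm{R}_i)$ holds globally here (the AIRM makes $\mathscr{P}(N,\mathbb{C})$ a Cartan--Hadamard manifold, so there is no injectivity-radius caveat), the chain rule gives the $1/d_{\text{AIRM}}$ weighting for the median away from the data points, and the conjugation by $\bm{R}_t^{\pm 1/2}$ together with $\Log(\bm{A}^{-1})=-\Log\bm{A}$ reproduces \eqref{eq: gradAIRMmean} and \eqref{eq: gradAIRMmedian} exactly. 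The paper does not actually prove this theorem --- it defers to the cited references --- but your route (Riemannian gradient of the objective inserted into the exponential map \eqref{eq:ExpAIRM}) is the standard one and is precisely the strategy the paper itself uses for the analogous BW results in Theorems \ref{thm: BWmeanfix} and \ref{prop: BWm}, so nothing further is needed.
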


{\bf LE mean and median.} 
\begin{thm}\label{th: LEmean}
    For the HPD matrices $\{ \bm{R}_i\}_{i=1}^m$, the LE mean can be derived in the closed form
    \begin{equation}
        \exp \left( \frac{1}{m} \sum^m_{i=1} \Log\bm{R}_i \right),
    \end{equation}
    a proof of which can be found in \cite{AFPA2007}.
\end{thm}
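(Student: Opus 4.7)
The plan is to pull the optimisation problem back from $\mathscr{P}(N,\mathbb{C})$ to the Euclidean vector space $\mathscr{H}(N,\mathbb{C})$ via the matrix logarithm, thereby reducing it to a standard least-squares computation. First I would recall two facts from Section~\ref{sec: LEM}: (i) the LE geodesic distance is given by
\begin{equation*}
d^2_{\text{LE}}(\bm{P}_1,\bm{P}_2) = \normf{\operatorname{Log}\bm{P}_1-\operatorname{Log}\bm{P}_2}^2,
\end{equation*}
and (ii) the matrix logarithm $\operatorname{Log}:\mathscr{P}(N,\mathbb{C})\to\mathscr{H}(N,\mathbb{C})$ is a smooth bijection with inverse $\exp$, since every HPD matrix admits a unique Hermitian principal logarithm through its spectral decomposition. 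Moreover, the definition \eqref{eq:LEM} of the LE metric is already \emph{built} as the pull-back of the Frobenius inner product through $\operatorname{Log}$, so that $\operatorname{Log}$ is a global Riemannian isometry from $\mathscr{P}(N,\mathbb{C})$ onto $\mathscr{H}(N,\mathbb{C})$ endowed with its natural Euclidean structure.

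Using these facts, I would substitute $d_{\text{LE}}$ into the defining variational problem \eqref{eq:geoMeanDef} and introduce the change of variables $\bm{S}:=\operatorname{Log}\bm{R}$, $\bm{S}_i:=\operatorname{Log}\bm{R}_i$ for $i=1,\ldots,m$. The bijectivity of $\operatorname{Log}$ then converts the LE mean problem into
\begin{equation*}
\underset{\bm{S}\in\mathscr{H}(N,\mathbb{C})}{\argmin}\ \frac{1}{m}\sum_{i=1}^m\normf{\bm{S}-\bm{S}_i}^2,
\end{equation*}
which is a strictly convex Euclidean least-squares problem. Setting its Frobenius gradient $\frac{2}{m}\sum_{i=1}^m(\bm{S}-\bm{S}_i)$ to zero yields the unique minimiser $\bm{S}^{\star} = \frac{1}{m}\sum_{i=1}^m \bm{S}_i$, which is automatically Hermitian since each $\bm{S}_i$ is. Finally, applying the inverse map $\exp$ returns
\begin{equation*}
\bm{R}^{\star} = \exp\left(\frac{1}{m}\sum_{i=1}^m \operatorname{Log}\bm{R}_i\right),
\end{equation*}
recovering the closed form claimed in the statement.

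The only step requiring any care is the equivalence between minimising over $\mathscr{P}(N,\mathbb{C})$ and minimising over $\mathscr{H}(N,\mathbb{C})$; but this is immediate from the fact that $\operatorname{Log}$ is a bijection and that the objective depends on $\bm{R}$ only through $\operatorname{Log}\bm{R}$. No delicate curvature, convexity, or uniqueness arguments are needed here, in contrast to the AIRM and BW cases discussed above, precisely because the LE geometry linearises $\mathscr{P}(N,\mathbb{C})$ into a Euclidean space on which the centroid is a textbook object.
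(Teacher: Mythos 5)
Your proof is correct. The paper itself gives no proof of this theorem and simply defers to \cite{AFPA2007}; your argument --- using that $d_{\text{LE}}$ is the Frobenius distance between matrix logarithms, so the variational problem \eqref{eq:geoMeanDef} pulls back under the bijection $\operatorname{Log}$ to a strictly convex Euclidean least-squares problem on $\mathscr{H}(N,\mathbb{C})$ whose unique minimiser is the arithmetic mean of the $\operatorname{Log}\bm{R}_i$ --- is exactly the standard derivation found in that reference, so nothing further is needed.
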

The exponential map of the LE metric \eqref{eq: ExpLEM} is not explicit, and instead, we utilise the fixed-point algorithm for the LE median \cite{MoaAvg}.

\begin{thm}\label{prop: LEmd}
    For $m$ HPD matrices $\{ \bm{R}_{i}\}_{i=1}^m$, the LE median can be derived by the fixed-point algorithm as follows:
    \begin{equation}\label{eq: LEmedianfixpoint}
        \begin{aligned}
            \bm{R}_{t+1} = \exp & \left( \left( \sum\limits^m_{i=1} \frac{ \Log\bm{R}_i}{d_{\text{LE}} \left(\bm{R}_i, \bm{R}_{t} \right)}  \right) \right. \\
            & ~~~~ \left. \times \left( \sum\limits^m_{i=1} \frac{1}{d_{\text{LE}} \left(\bm{R}_i, \bm{R}_{t} \right)} \right)^{-1} \right). \\
        \end{aligned}
    \end{equation}
\end{thm}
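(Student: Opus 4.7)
The plan is to exploit the fact that the LE metric linearises the median problem through the matrix logarithm. As established in Section \ref{sec: LEM}, we have $d_{\text{LE}}(\bm{R}_1,\bm{R}_2)=\normf{\Log \bm{R}_1-\Log \bm{R}_2}$, and $\Log:\mathscr{P}(N,\mathbb{C})\to\mathscr{H}(N,\mathbb{C})$ is a diffeomorphism with inverse $\exp$. Setting $\bm{S}:=\Log\bm{R}$ and $\bm{S}_i:=\Log\bm{R}_i$ therefore turns the LE median problem \eqref{eq:geoMedianDef} into the classical Euclidean Weber (geometric median) problem
\[
\underset{\bm{S}\in\mathscr{H}(N,\mathbb{C})}{\argmin}\;\frac{1}{m}\sum_{i=1}^m \normf{\bm{S}-\bm{S}_i}
\]
posed on the flat real vector space of Hermitian matrices equipped with the Frobenius inner product.

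Next I would compute the first-order optimality condition. Away from the singular points $\bm{S}=\bm{S}_i$, the Fréchet derivative of each summand is $\nabla_{\bm{S}}\normf{\bm{S}-\bm{S}_i}=(\bm{S}-\bm{S}_i)/\normf{\bm{S}-\bm{S}_i}$, so setting the gradient of the objective to zero yields
\[
\sum_{i=1}^m\frac{\bm{S}-\bm{S}_i}{\normf{\bm{S}-\bm{S}_i}}=\bm{0}.
\]
Separating the $\bm{S}$-terms from the $\bm{S}_i$-terms and inverting the positive scalar sum produces
\[
\bm{S}=\left(\sum_{i=1}^m\frac{\bm{S}_i}{\normf{\bm{S}-\bm{S}_i}}\right)\left(\sum_{i=1}^m\frac{1}{\normf{\bm{S}-\bm{S}_i}}\right)^{-1},
\]
which is precisely the Weiszfeld fixed-point equation for the Euclidean geometric median.

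To obtain an iterative scheme, I would freeze the denominators at the current iterate, replacing $\normf{\bm{S}-\bm{S}_i}$ by $d_{\text{LE}}(\bm{R}_i,\bm{R}_t)$, which defines $\bm{S}_{t+1}$. Because each $\bm{S}_i$ is Hermitian and the weights are positive real scalars, $\bm{S}_{t+1}\in\mathscr{H}(N,\mathbb{C})$, so applying $\exp$ returns an element of $\mathscr{P}(N,\mathbb{C})$ and recovers exactly the update \eqref{eq: LEmedianfixpoint}.

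The main obstacle is the well-known degeneracy of Weiszfeld's scheme: when the current iterate coincides with one of the sample matrices, the corresponding denominator vanishes and the gradient computation above is invalid at that point. Because the LE geometry has been pulled back to a genuine Euclidean setting by the log chart, the standard remedies (Kuhn's modification, or smoothing $\normf{\bm{S}-\bm{S}_i}$ to $\sqrt{\normf{\bm{S}-\bm{S}_i}^2+\varepsilon}$ and passing $\varepsilon\to 0$) transfer verbatim and do not require any manifold-specific adaptation, so the derivation of the stated fixed-point formula is unaffected.
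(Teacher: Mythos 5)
Your proof is correct, but it takes a different route from the paper's. The paper works intrinsically: it computes the Riemannian gradient of $G(\bm{R})=\frac1m\sum_i d_{\text{LE}}(\bm{R}_i,\bm{R})$ via the Fr\'echet derivative of the matrix logarithm, which brings in the integral kernel $\int_0^1((\bm{R}-\bm{I})s+\bm{I})^{-1}\,(\cdot)\,((\bm{R}-\bm{I})s+\bm{I})^{-1}\,ds$, and then cancels that (positive, hence invertible) operator when setting the gradient to zero, arriving at the stationarity equation \eqref{eq:LEmediTH}. You instead observe that $\Log$ is a global isometry from $(\mathscr{P}(N,\mathbb{C}),\text{LE})$ onto $(\mathscr{H}(N,\mathbb{C}),\normf{\cdot})$ and change coordinates first, which reduces the problem to the classical Weber problem and makes the gradient computation elementary; the invertibility of $D_{\bm{P}}\Log$ that the paper must invoke explicitly is absorbed into your chart change. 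The two derivations produce the identical fixed-point equation (your rearranged condition is \eqref{eq:LEmediTH} after multiplying by the scalar $(\sum_i 1/d_{\text{LE}})^{-1}$). What your route buys is transparency -- it identifies \eqref{eq: LEmedianfixpoint} as Weiszfeld's iteration in log coordinates, so its known convergence theory and the remedies for the degenerate case $\bm{R}_t=\bm{R}_i$ transfer verbatim, a point the paper does not address; what the paper's route buys is uniformity of presentation with the AIRM and BW derivations, where no flattening isometry exists and the intrinsic Fr\'echet-derivative computation is unavoidable.
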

\begin{proof}
    The LE median is the solution of the optimisation problem of which the objective function $G(\bm{R})$ is given by 
    \begin{equation}
        G(\bm{R}) = \frac{1}{m} \sum_{i=1}^{m} d_{\text{LE}}\left(\bm{R}_i,\bm{R} \right).
    \end{equation}
    By using the Fr\'echet derivative \cite{LangSerge1995}, its Riemannian gradient, $\riegrad G \left( \bm{R} \right)$, is derived \cite{LogMat}, 
    \begin{equation*}
        \begin{aligned}
            & \inpror{\riegrad G \left( \bm{R} \right) , \bm{Y}} := \frechetderi G\left(\bm{R} + \varepsilon \bm{Y}\right) \\
            & = \frac{1}{m} \tr \left( \sum_{i=1}^m \int_0^1 \left( (\bm{R} - \bm{I})s+\bm{I}\right)^{-1} \right. \\
            & ~~~~~~~~~~~~ \left. \times \frac{\Log \bm{R} - \Log\bm{R}_i}{d_{\text{LE}} \left(\bm{R}_i, \bm{R} \right)} \left((\bm{R} - \bm{I})s+\bm{I}\right)^{-1} ds \bm{Y} \right).
        \end{aligned}
    \end{equation*}
    Solving $\riegrad G \left(\bm{R}\right) = \bm{0}$, we obtain
    \begin{equation}\label{eq:LEmediTH}
        \sum\limits^m_{i=1} \frac{\Log \bm{R}}{d_{\text{LE}} \left(\bm{R}_i, \bm{R} \right)} = \sum\limits^m_{i=1} \frac{ \Log\bm{R}_i}{d_{\text{LE}} \left(\bm{R}_i, \bm{R} \right)},
    \end{equation}
    which gives the fixed-point algorithm \eqref{eq: LEmedianfixpoint}.
\end{proof}

{\bf BW mean and median.}
The BW mean can be calculated by fixed-points algorithms (\eqref{eq: BWMeanfixpoint1} and \eqref{eq: BWMeanfixpoint2} below), whose convergence is known to be slow \cite{alvarez2016fixed}. 
To obtain the BW mean and median, we will introduce the Riemannian gradient descent algorithms which are relatively faster.

\begin{thm}\label{thm: BWmeanfix}
For $m$ HPD matrices $\{ \bm{R}_{i}\}_{i=1}^m$, the BW mean can be computed by either of the following fixed-point algorithms \cite{BHATIA2019165, alvarez2016fixed},
\begin{equation}  \label{eq: BWMeanfixpoint1}
    \bm{R}_{t+1} = \sum\limits_{i=1}^{m} \left(\bm{R}_t^{\frac12} \bm{R}_i \bm{R}_t^{\frac12} \right)^{\frac12}, \\
\end{equation}
and 
\begin{equation}  \label{eq: BWMeanfixpoint2}
    \bm{R}_{t+1} = \bm{R}_t^{-\frac12} \left( \sum\limits_{i=1}^{m} \left(\bm{R}_t^{\frac12} \bm{R}_i \bm{R}_t^{\frac12} \right)^{\frac12} \right)^2 \bm{R}_t^{-\frac12}.
\end{equation}
\end{thm}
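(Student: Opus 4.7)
The plan is to characterize the BW mean as the unique HPD solution of a single matrix fixed-point equation, and then verify that both iterations \eqref{eq: BWMeanfixpoint1} and \eqref{eq: BWMeanfixpoint2} have precisely that equation as their fixed point. First I would set $F(\bm{R}):=\frac{1}{m}\sum_{i=1}^{m}d^2_{\text{BW}}(\bm{R},\bm{R}_i)$ and invoke the general fact that on any Riemannian manifold the Riemannian gradient of $\tfrac12 d^2(\cdot,\bm{R}_i)$ at $\bm{R}$ equals $-\operatorname{Log}(\bm{R},\bm{R}_i)$. Consequently the BW mean, whose existence and uniqueness are cited in Subsection~\ref{sec: BWM}, is the unique HPD matrix satisfying $\sum_{i=1}^{m}\operatorname{Log}(\bm{R},\bm{R}_i)=\bm{0}$.

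Next I would translate this tangent-space identity into a matrix equation. Using the BW logarithm map from Section~\ref{sec: BWM} together with the defining Lyapunov relation $\bm{R}\mathcal{L}_{\bm{R}}[\bm{V}]+\mathcal{L}_{\bm{R}}[\bm{V}]\bm{R}=\bm{V}$, one checks directly that
\begin{equation*}
\mathcal{L}_{\bm{R}}\bigl[\operatorname{Log}(\bm{R},\bm{R}_i)\bigr]=\bm{R}^{-1}\#\bm{R}_i-\bm{I}.
\end{equation*}
Since $\mathcal{L}_{\bm{R}}$ is an invertible linear operator on $\mathscr{H}(N,\mathbb{C})$ when $\bm{R}\in\mathscr{P}(N,\mathbb{C})$, applying it to the vanishing-gradient equation produces the equivalent condition $\sum_{i=1}^{m}\bigl(\bm{R}^{-1}\#\bm{R}_i-\bm{I}\bigr)=\bm{0}$. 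Substituting the Pusz--Woronowicz representation $\bm{R}^{-1}\#\bm{R}_i=\bm{R}^{-1/2}\bigl(\bm{R}^{1/2}\bm{R}_i\bm{R}^{1/2}\bigr)^{1/2}\bm{R}^{-1/2}$ from \eqref{eq:Riemid} and multiplying by $\bm{R}^{1/2}$ on both sides yields the matrix fixed-point equation
\begin{equation*}
\bm{R}=\frac{1}{m}\sum_{i=1}^{m}\bigl(\bm{R}^{1/2}\bm{R}_i\bm{R}^{1/2}\bigr)^{1/2},
\end{equation*}
which, up to the overall normalization, is the fixed-point equation for the iteration \eqref{eq: BWMeanfixpoint1}.

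Finally I would show that \eqref{eq: BWMeanfixpoint2} shares this set of fixed points. If $\bm{R}$ satisfies the displayed equation above, then squaring both sides and conjugating by $\bm{R}^{-1/2}$ produces the fixed-point equation for \eqref{eq: BWMeanfixpoint2}. Conversely, any fixed point $\bm{R}$ of \eqref{eq: BWMeanfixpoint2} satisfies $\bm{R}^{2}=\bigl(\sum_{i=1}^{m}(\bm{R}^{1/2}\bm{R}_i\bm{R}^{1/2})^{1/2}\bigr)^{2}$, and the uniqueness of the positive square root on $\mathscr{P}(N,\mathbb{C})$ recovers the equation for \eqref{eq: BWMeanfixpoint1}. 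The main technical obstacle is the middle step: establishing $\mathcal{L}_{\bm{R}}[\operatorname{Log}(\bm{R},\bm{R}_i)]=\bm{R}^{-1}\#\bm{R}_i-\bm{I}$ and invoking invertibility of $\mathcal{L}_{\bm{R}}$ so that the manifold-level condition $\sum_{i}\operatorname{Log}(\bm{R},\bm{R}_i)=\bm{0}$ is genuinely equivalent to the matrix equation, rather than a mere consequence of it. Once this reduction is in place, the remainder is algebraic manipulation using \eqref{eq:Riemid}.
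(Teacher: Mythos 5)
Your proposal is correct and lands on the same stationarity equation $\sum_{i}\bigl(\bm{I}-\bm{R}^{-1}\#\bm{R}_i\bigr)=\bm{0}$ as the paper, but by a genuinely different route. The paper computes the Fr\'echet derivative of $d^2_{\text{BW}}(\bm{R}_i,\cdot)$ explicitly (using the identity $\tr\frac{d}{d\varepsilon}\bm{P}^{\frac12}=\frac12\tr(\bm{P}^{-\frac12}\frac{d}{d\varepsilon}\bm{P})$), converts the resulting trace pairing into a Riemannian gradient through the BW metric and the Lyapunov operator, and then needs the Appendix's Sylvester-equation argument to pass from $\bm{R}\bm{X}+\bm{X}\bm{R}=\bm{0}$ to $\bm{X}=\bm{0}$. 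You instead invoke the general fact $\riegrad\frac12 d^2(\cdot,\bm{Q})=-\Log(\cdot,\bm{Q})$, read off $\mathcal{L}_{\bm{R}}[\Log(\bm{R},\bm{R}_i)]=\bm{R}^{-1}\#\bm{R}_i-\bm{I}$ directly from the closed form of the BW logarithm map, and use invertibility of $\mathcal{L}_{\bm{R}}$; that invertibility is exactly the content of the Appendix (the Sylvester operator $\bm{X}\mapsto\bm{R}\bm{X}+\bm{X}\bm{R}$ has spectrum $\{\lambda'_i+\lambda'_j\}>0$), so nothing is lost, and your version avoids differentiating the matrix square root altogether. Two caveats: first, the gradient-of-squared-distance identity you lean on requires that $d^2_{\text{BW}}$ be smooth and the log map single-valued at the points in question; this holds on the open PD cone, but since the BW manifold is incomplete and nonnegatively curved it deserves an explicit remark or citation rather than being treated as automatic. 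Second, your careful bookkeeping gives $\bm{R}=\frac{1}{m}\sum_i(\bm{R}^{\frac12}\bm{R}_i\bm{R}^{\frac12})^{\frac12}$, which is the correct fixed-point equation for the objective $\frac{1}{m}\sum_i d^2_{\text{BW}}$; the paper's displayed iterations \eqref{eq: BWMeanfixpoint1}--\eqref{eq: BWMeanfixpoint2} and equation \eqref{eq: BWmean_solution} drop the $\frac{1}{m}$, so your ``up to the overall normalization'' remark is in fact flagging a slip in the paper rather than in your own argument. Your closing equivalence between the fixed-point sets of the two iterations (squaring and conjugating by $\bm{R}^{-\frac12}$, then using uniqueness of the HPD square root) is sound and is not spelled out in the paper.
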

\begin{proof} A detailed proof seems not available in the literature and is provided here.
    Let $G(\bm{R})$ be the objective function for the BW mean which is given by 
    \begin{equation}
        G(\bm{R}) = \frac{1}{m} \sum_{i=1}^{m} d^2_{\text{BW}}\left(\bm{R}_i,\bm{R} \right).
    \end{equation}
    By using the Fr\'echet derivative \cite{LangSerge1995}, its Riemannian gradient satisfies
        \begin{align}
            & \inpror{ \riegrad G\left(\bm{R}\right), \bm{Y} } : = \frechetderi G\left(\bm{R} + \varepsilon \bm{Y}\right) \nonumber\\
            & = \tr \left( \frac{1}{m} \sum\limits_{i=1}^{m} \left( \bm{I} - \bm{R}_i^{\frac12} \left( \bm{R}_i^{-\frac12} \bm{R}^{-1} \bm{R}_i^{-\frac12}\right)^{\frac12} \bm{R}_i^{\frac12}  \right)\bm{Y} \right) \nonumber \\ 
            & = \tr \left( \frac{1}{m} \sum\limits_{i=1}^{m} \left( \bm{I} - \bm{R}_i \# \bm{R}^{-1} \right) \bm{Y} \right), \label{eq:BWgrad}
        \end{align}
    where $\bm{Y} \in T_{\bm{R}} \mathscr{P}(N,\mathbb{C}).$ 
    Note that we used the fact 
    \begin{equation*}
        \tr \left( \frac{d}{d\varepsilon} \bm{P}^{\frac12}(\varepsilon) \right) = \frac12 \tr \left( \bm{P}^{-\frac12}(\varepsilon) \frac{d}{d\varepsilon} \bm{P}(\varepsilon) \right),
    \end{equation*}
    where $\bm{P}(\varepsilon) \in  \mathscr{P}(N,\mathbb{C})$. 
    From the definition of the BW metric \eqref{eq: BWmetric}, we have 
    \begin{equation}\label{eq:BWmetgrad}
        \inpror{ \riegrad G\left(\bm{R}\right), \bm{Y} } = \frac12 \tr \left( \Lyapunov{\bm{R}}{\riegrad G\left(\bm{R}\right)} \bm{Y}\right).
    \end{equation}
    Comparing \eqref{eq:BWgrad} and \eqref{eq:BWmetgrad}, we obtain
    \begin{equation}
        \Lyapunov{\bm{R}}{ \riegrad G\left(\bm{R}\right)} = \frac{2}{m} \sum\limits_{i=1}^{m}\left( \bm{I} - \bm{R}_i \# \bm{R}^{-1} \right).
    \end{equation}
    Consequently, from the definition of the Lyapunov operator, the Riemannian gradient is
    \begin{equation*} \label{eq:riegradBWMean}
        \begin{aligned}
            \riegrad G\left(\bm{R}\right) &= \bm{R} \Lyapunov{\bm{R}}{\riegrad G\left(\bm{R} \right)} + \Lyapunov{\bm{R}}{\riegrad G\left(\bm{R} \right)} \bm{R} \\
            & = \frac{2}{m} \sum\limits_{i=1}^{m} \bm{R} \left( \bm{I} - \bm{R}_i \# \bm{R}^{-1} \right) \\
            &\quad\quad + \frac{2}{m} \sum\limits_{i=1}^{m} \left( \bm{I} - \bm{R}_i \# \bm{R}^{-1} \right) \bm{R}.
        \end{aligned}
    \end{equation*}
    Since the BW mean is the solution of $\riegrad G\left(\bm{R}\right) = \bm{0}$, we have 
    \begin{equation} \label{eq: Ap1}
        \sum\limits_{i=1}^{m} \bm{R} \left( \bm{I} - \bm{R}_i \# \bm{R}^{-1} \right) + \sum\limits_{i=1}^{m} \left( \bm{I} - \bm{R}_i \# \bm{R}^{-1} \right) \bm{R} = \bm{0}.
    \end{equation}
    By the Appendix, equation \eqref{eq: Ap1} is equivalent to 
    \begin{equation}\label{eq: Ap2}
        \sum\limits_{i=1}^{m} \left( \bm{I} - \bm{R}_i \# \bm{R}^{-1} \right) = \bm{0}.
    \end{equation}
    By using the symmetry $\bm{R}_i \# \bm{R}^{-1} = \bm{R}^{-1} \# \bm{R}_i$, we obtain
    \begin{equation}
        \sum\limits_{i=1}^{m} \left( \bm{I} - \bm{R}^{-1} \# \bm{R}_i \right) = \bm{0},
    \end{equation}
    and consequently, 
    \begin{equation} \label{eq: BWmean_solution}
        \bm{R} = \sum\limits_{i=1}^{m} \left(\bm{R}^{\frac12} \bm{R}_i \bm{R}^{\frac12} \right)^{\frac12},
    \end{equation}
    which yields the iterations \eqref{eq: BWMeanfixpoint1} and \eqref{eq: BWMeanfixpoint2}.
\end{proof}

\begin{thm}\label{prop: BWm}
    For $m$ HPD matrices $\{ \bm{R}_{i}\}_{i=1}^m$, the BW mean and median can respectively be computed by the Riemannian gradient descent algorithms
    \begin{equation} \label{eq: RGDofBWmean}
        \bm{R}_{t+1} = \left( \sum\limits_{i=1}^{m}  \left( \bm{R}_i \# \bm{R}_{t}^{-1} \right) \right) \bm{R}_{t} \left( \sum\limits_{i=1}^{m} \left( \bm{R}_i \# \bm{R}_{t}^{-1} \right) \right),
    \end{equation}
    and 
    \begin{equation} \label{eq: RGDofBWmedian}
        \bm{R}_{t+1} = \bm{S}_t \bm{R}_{t}  \bm{S}_t ,
    \end{equation}
    where
    \begin{equation}
        \bm{S}_t = \bm{I} - \eta_t \sum\limits_{i=1}^{m} \cfrac{ \bm{I} - \bm{R}_i \# \bm{R}_t^{-1} }{ d_{\text{BW}}\left( \bm{R}_{i},\bm{R}_{t} \right) }.
    \end{equation}
    These are consistent with previous studies for probability spaces \cite{altschuler2021averaging}. 
\end{thm}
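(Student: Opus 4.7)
The plan is to specialise the Riemannian gradient descent iteration
$\bm{R}_{t+1} = \Exp_{\bm{R}_t}\!\bigl(-\eta_t\,\riegrad G(\bm{R}_t)\bigr)$
to the BW geometry, using the closed-form exponential map \eqref{eq: ExpBW}, namely
$\Exp_{\bm{P}}(\dot{\bm{P}}) = (\bm{I} + \Lyapunov{\bm{P}}{\dot{\bm{P}}})\,\bm{P}\,(\bm{I} + \Lyapunov{\bm{P}}{\dot{\bm{P}}})$.
Both stated updates \eqref{eq: RGDofBWmean} and \eqref{eq: RGDofBWmedian} already have the sandwich form $\bm{T}_t \bm{R}_t \bm{T}_t$, so matching them to the exponential map reduces to identifying the outer factor $\bm{I} + \Lyapunov{\bm{R}_t}{-\eta_t\,\riegrad G(\bm{R}_t)}$, and the bulk of the work is just to compute the Lyapunov image of the Riemannian gradient in each of the two cases.

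For the BW mean I would reuse the calculation of Theorem \ref{thm: BWmeanfix}: with $G(\bm{R}) = \tfrac{1}{m}\sum_i d^2_{\text{BW}}(\bm{R}_i,\bm{R})$ the image $\Lyapunov{\bm{R}}{\riegrad G(\bm{R})} = \tfrac{2}{m}\sum_i(\bm{I} - \bm{R}_i\#\bm{R}^{-1})$ is already in hand. Substituting into \eqref{eq: ExpBW} with an appropriate step size so that the $\bm{I}$ term is absorbed into the sum produces the outer factor $\sum_i \bm{R}_i\#\bm{R}_t^{-1}$, giving \eqref{eq: RGDofBWmean}. As a cross-check one can verify the equivalence with the fixed-point iteration \eqref{eq: BWMeanfixpoint2}: conjugating $\sum_i(\bm{R}_t^{1/2}\bm{R}_i\bm{R}_t^{1/2})^{1/2}$ by $\bm{R}_t^{-1/2}$ and invoking the symmetry $\bm{R}_i\#\bm{R}_t^{-1} = \bm{R}_t^{-1}\#\bm{R}_i$ reproduces the same sandwich expression.

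For the BW median I would repeat the Fréchet derivative argument from Theorem \ref{thm: BWmeanfix}, but now with $G(\bm{R}) = \tfrac{1}{m}\sum_i d_{\text{BW}}(\bm{R}_i,\bm{R})$. The only new ingredient is the chain rule on the square root $d_{\text{BW}} = \sqrt{d^2_{\text{BW}}}$, which inserts a factor $1/\bigl(2\,d_{\text{BW}}(\bm{R}_i,\bm{R})\bigr)$ in front of each summand of the derivative already computed for the mean. Reading off the Lyapunov image then yields
$\Lyapunov{\bm{R}}{\riegrad G(\bm{R})} = \tfrac{1}{m}\sum_i (\bm{I} - \bm{R}_i\#\bm{R}^{-1})/d_{\text{BW}}(\bm{R}_i,\bm{R})$,
and inserting this into \eqref{eq: ExpBW} while absorbing the $1/m$ into $\eta_t$ delivers precisely the outer factor $\bm{S}_t$ stated in \eqref{eq: RGDofBWmedian}.

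The main obstacle is the median gradient: one must justify differentiating the operator square root inside $d^2_{\text{BW}}$, reapply the trace identity $\tr(\tfrac{d}{d\varepsilon}\bm{P}^{1/2}(\varepsilon)) = \tfrac{1}{2}\tr(\bm{P}^{-1/2}(\varepsilon)\tfrac{d}{d\varepsilon}\bm{P}(\varepsilon))$ used in Theorem \ref{thm: BWmeanfix}, and ensure that $d_{\text{BW}}(\bm{R}_i,\bm{R}_t) > 0$ along the iterates so that the weights are well defined. Once the Lyapunov images are in place, matching \eqref{eq: ExpBW} against the sandwich structure of \eqref{eq: RGDofBWmean}--\eqref{eq: RGDofBWmedian} is purely algebraic bookkeeping of the step size and the $1/m$ normalisation, and the consistency with the probability-space results of \cite{altschuler2021averaging} then follows by specialisation.
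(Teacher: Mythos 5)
Your proposal is correct and follows essentially the same route as the paper: the paper's proof is precisely the one-liner of inserting $-\eta_t\,\riegrad G(\bm{R}_t)$, whose Lyapunov image was already computed in Theorem~\ref{thm: BWmeanfix}, into the closed-form exponential map \eqref{eq: ExpBW}, declaring the median case ``similar'', and your sketch merely fills in that median gradient (the $1/(2\,d_{\text{BW}})$ chain-rule factor) explicitly. The one piece of bookkeeping that does not close as smoothly as you suggest --- and which the paper glosses over in exactly the same way --- is that with $\eta_t=1/2$ the outer factor comes out as $\frac1m\sum_i \bm{R}_i \# \bm{R}_t^{-1}$ rather than $\sum_i \bm{R}_i \# \bm{R}_t^{-1}$, so the literal match to \eqref{eq: RGDofBWmean} rests on the same dropped $1/m$ already present in \eqref{eq: BWmean_solution}; your cross-check by conjugating the fixed-point iteration \eqref{eq: BWMeanfixpoint2} is the cleanest way to confirm the stated formula.
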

\begin{proof}
Using the exponential map \eqref{eq: ExpBW}, the Riemannian gradient descent algorithm for the BW mean is 
    \begin{equation*}
        \begin{aligned}
            \bm{R}_{t+1} & = \Exp\left(\bm{R}_{t},-\eta_t \riegrad G\left(\bm{R}_t\right)\right) \\
            & = \left(\bm{I} + \Lyapunov{\bm{R}_t}{- \eta_t \riegrad G\left(\bm{R}_t\right)} \right) \bm{R}_t \\ 
            & ~~~~~~~~~~~ \times\left(\bm{I} + \Lyapunov{\bm{R}_t}{- \eta_t \riegrad G\left(\bm{R}_t\right)} \right),
        \end{aligned}
    \end{equation*}
    which is exactly \eqref{eq: RGDofBWmean}. Proof for the BW median is similar. 
\end{proof}

\subsection{Computational Complexity}
\label{sec: CC}
In this subsection, we study computational complexity of algorithms for the AIRM, LE, BW means and medians given by Theorems \ref{prop: AIRMm}, \ref{th: LEmean}, \ref{prop: LEmd}, \ref{prop: BWm} and the arithmetic mean \eqref{eq:arith}. 
For simplicity, we focus solely on the leading terms and only provide the computational complexity for a single step of numerical iterations.
The complexity figures assume that $m$ number of $N\times N$ HPD matrices are given and the arithmetic operations with individual elements have a complexity of $O(1)$. 
The following facts are used: $\bm{R}^{-1} \sim O(N^3)$, $\bm{R}^{\frac{1}{2}} \sim O(N^3)$, and $\operatorname{Log}\bm{R} \sim O(N^4)$ \cite{Hig2008, 9764734}. 
The matrix exponential in all algorithms only deals with Hermitian matrices, and one way to calculate the matrix exponential is through eigenvalue decomposition, whose complexity is $O(N^3)$, 
the same as that of matrix inversion and matrix square root.

\begin{table}[ht]
\caption{Computational complexity of the means and medians.}
\centering
\begin{tabular}{|c|c|}
        \hline
        Geometric measures  & Complexity  \\
        \hline
        Arithmetic mean \eqref{eq:arith} &  $O(N^2(m-1))$ \\
        \hline
        AIRM mean(Equation \eqref{eq: gradAIRMmean}, per iteration)  &  $O(N^4m)$  \\
        \hline
        AIRM median(Equation \eqref{eq: gradAIRMmedian}, per iteration)  &  $O(N^4m)$  \\
        \hline
        LE mean (Theorem \ref{th: LEmean})&  $ O(N^4m) $ \\
        \hline
        LE median (Theorem \ref{prop: LEmd}, per iteration)&  $ O(N^4m) $ \\
        \hline
        BW mean (Equation \eqref{eq: RGDofBWmean}, per iteration) & $O(N^3(m+1))$  \\
        \hline
        BW median (Equation \eqref{eq: RGDofBWmedian}, per iteration) & $O(N^3(m+1))$  \\
        \hline
    \end{tabular}
    \label{tab: m}
\end{table}

Among the numerical methods in Table \ref{tab: m}, the computational complexity of the BW metric is less than that of the other metrics.
Next, we investigate the computational cost and the convergence rate of the fixed-point algorithms \eqref{eq: BWMeanfixpoint1}, \eqref{eq: BWMeanfixpoint2} and the Riemannian gradient descent algorithm  \eqref{eq: RGDofBWmean} for computing the BW mean. In each algorithm, the BW mean of 10 HPD matrices is computed numerically:
the initial point is chosen as the arithmetic mean of the prepared matrices and the tolerance $|\Delta \bm{R}|$ is $10^{-5}$. 
According to \Figref{fig: convergencerate} and \Tabref{tab: time}, it is obvious that the Riemannian gradient descent algorithm \eqref{eq: RGDofBWmean} is faster than the fixed-point algorithms \eqref{eq: BWMeanfixpoint1} and \eqref{eq: BWMeanfixpoint2}.

\begin{table}[ht]
    \caption{Computational cost about the BW mean.}
    \centering
    \begin{tabular}{|c|c|}
            \hline
            Methods  & Time (second)  \\
            \hline
            The fixed-point algorithm \eqref{eq: BWMeanfixpoint1} &  0.997728  \\
            \hline
            The fixed-point algorithm \eqref{eq: BWMeanfixpoint2} &  0.080401 \\
            \hline
            The Riemannian gradient descent algorithm \eqref{eq: RGDofBWmean} &  0.052697 \\
            \hline
        \end{tabular}
        \label{tab: time}
    \end{table}

\begin{figure}[htbp]
	\centering
	\includegraphics[width = \linewidth]{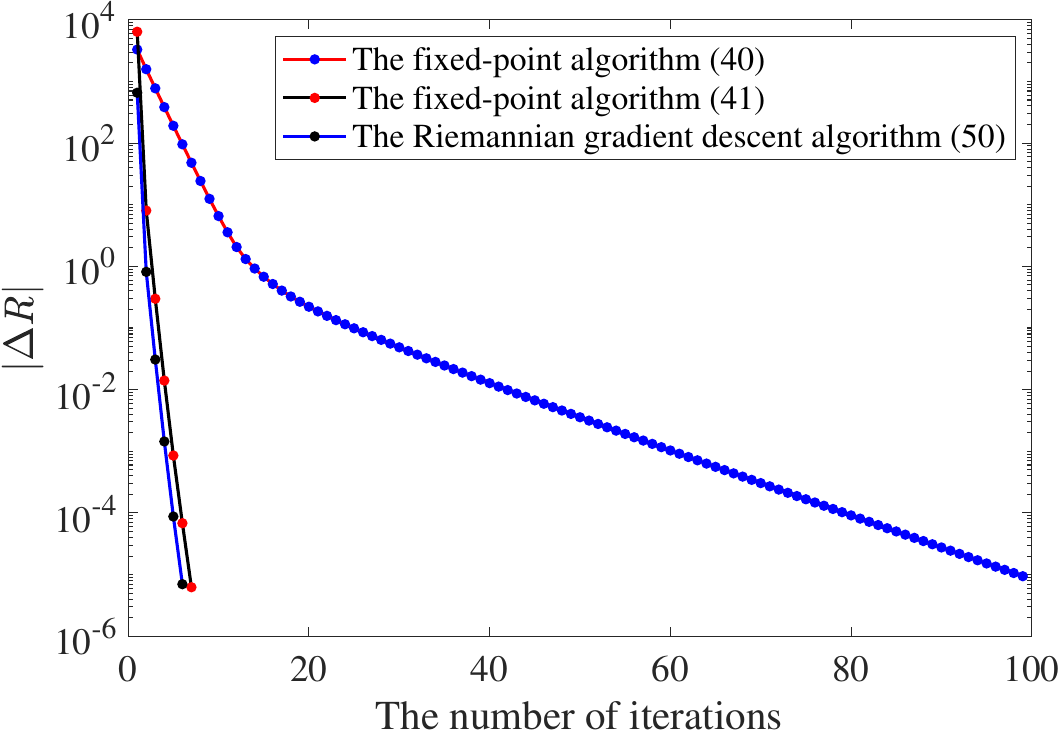}
	\caption{Convergence rates of the fixed-point algorithms \eqref{eq: BWMeanfixpoint1}, \eqref{eq: BWMeanfixpoint2} and the Riemannian gradient descent algorithm \eqref{eq: RGDofBWmean}}
    \label{fig: convergencerate}
\end{figure}

In the matrix-CFAR, we can leverage the AIRM (or LE, BW) mean and median as the matrix $\bm{R}_g$ in \eqref{eq:decison_of_R_g} and \Figref{fig: processMCFAR}.
From numerical simulations in the following section, the Riemannian gradient descent algorithms of Theorems \ref{prop: AIRMm} and \ref{prop: BWm} and the fixed-point algorithms of Theorems \ref{th: LEmean} and \ref{prop: LEmd} are all convergent 
when the initial point is set as the arithmetic mean and the tolerance is set to $10^{-3}$.

\section{NUMERICAL SIMULATIONS}
\label{sec: DP}
To access the detection performance, numerical simulations are performed. 
As a comparison, we present the detection performances of the Riemannian geometric means and medians under the AIRM, the LE metric, and the BW metric, as well as the AMF, ANMF and CA-CFAR methods in two cases. 
In the first case, we treat the ideal steering vector as the target signal, while in the second case, we account for the presence of steering signal mismatches.

\subsection{Scenario with the Ideal Steering Vector as the Target Signal}
In this subsection, 
we investigate the scenario using the ideal steering vector
\begin{equation}\label{eq: target}
    \dfrac{1}{\sqrt{N}}\left(1, \exp(-\operatorname{i}2\pi f_d), \ldots, \exp(-\operatorname{i}2\pi f_d(N-1))\right)^{\operatorname{T}}, 
\end{equation}
as the target $\bm{s}$.
Here $\operatorname{i}$ is the imaginary unit and $f_d $ is the normalized Doppler frequency of target.
The clutter is considered as the compound-Gaussian clutter, which has been commonly used for modelling heavy-tailed sea clutter \cite{seaclutter} and is defined by
\begin{equation}
	\bm{c} = \sqrt{\tau} \bm{z}, 
\end{equation}
where $\bm{z}$ is fast fluctuating and called the speckle component.
The model of the speckle component is the circularly-symmetric Gaussian distribution $ CN \left( \bm{0}, \bm{\Sigma} \right)$ with zero-mean and a covariance matrix $\bm{\Sigma}$, 
the probability density function (PDF) of which is 
\begin{equation}
	p\left(\bm{z}\ |\ \bm{0}, \bm{\Sigma}\right) = \frac{1}{\pi^N \det\left(\bm{\Sigma}\right)} \exp \left(- \bm{z}\hermconj \bm{\Sigma}^{-1} \bm{z} \right).
\end{equation}
The random variable $\tau$ is relatively slowly varying and called the texture component. 
In this paper, the model of the texture component $\tau$ is a gamma distribution of a shape parameter $ \alpha$ and a scale parameter $\beta$.
Its PDF is 
\begin{equation}
	q\left(\tau\ |\ \alpha, \beta\right) = \frac{1}{\beta^\alpha \Gamma\left(\alpha\right)} \tau^{\alpha -1} \exp \left(\frac{- \tau}{\beta} \right).
\end{equation}
Here $\Gamma (\cdot)$ is the gamma function. 

As a consequence, the clutter model is the complex K-distribution, or the compound-Gaussian model \cite{seaclutter}. 
In the simulations, the speckle component is sampled from $CN\left(\bm{0},\bm{\Sigma} \right)$ of which the known covariance matrix $\bm{\Sigma}$ is 
\begin{equation}\label{eq: sig}
	\bm{\Sigma} = \bm{\Sigma}_0 + \bm{I},
\end{equation}
where
\begin{equation}
	\bm{\Sigma}_0 \left(i,j\right) = \sigma_c^2 \rho^{\left| i-j \right|} \exp \left( \mathrm{i} 2\pi f_c (i-j) \right), \nonumber \ \ i,j = 1,2, \ldots , N.
\end{equation}
Here the clutter-to-noise ratio $\sigma_c^2$ is $20 \mathrm{dB}$, the one-lag coefficient $\rho$ is 0.9, and the normalized Doppler frequency of clutter $f_c$ is 0.2. 
Additionally, the shape parameter, $\alpha $ is 4 and the scale parameter, $\beta$ is 3.
The signal-to-clutter ratio (SCR) is 
\begin{equation}\label{eq: SCR}
    \mathrm{SCR} = |a|^2 \bm{s}\hermconj \bm{R}^{-1} \bm{s},
\end{equation}
where $a$ is the amplitude coefficient in \eqref{eq: H01}, $\bm{s}$ is a target signal and $\bm{R}$ is the clutter autocovariance matrix.
Furthermore, the false alarm rate $P_{fa}$ is $10^{-3}$.
The dimension of the observation data, $N$, is 8, and we study the scenario when the numbers of observation data $m$ are $N,2N,3N$.
Two interferences of which normalized Doppler frequency is $f_i = 0.2$ are inserted into clutter,
the detection probability $P_d$ is computed through 2000 independent trials, and the threshold $\gamma$ is computed by $100/P_{fa}$.

\Figref{fig: DPforCG} demonstrates that the detection performance is enhanced with an increasing amount of observation data.
The AMF and the ANMF are invalid when $m$ is small, while the matrix-CFAR detectors perform effectively.
From \Figref{fig: DPforCG}, the BW mean and median we have proposed have the best detection performance when $m=N$. 
In $m \ge 2N$, the ANMF outperforms all other detectors. The reason is that the accuracy of the SCM is sufficient when $m \ge 2N$ \cite{SCMbyMLE}. 
The matrix-CFAR using the BW metric behaves better than that of the other Riemannian metrics.
In general, from these simulation results, it can be noticed that the matrix-CFAR outperforms the AMF and ANMF when the number of data available is limited.
This can be attributed to the difference in the effect of clutter between the matrix-CFAR and the AMF and ANMF.
Specifically, the AMF and ANMF use contaminated observation data directly, whereas the matrix-CFAR utilises the autocovariance matrix of the observation data. 
Therefore, we conduct the next simulation, which is a detection simulation of the AMF using the Riemannian geometric means and medians instead of the SCM. 
We use the Gaussian clutter generated by $CN \left( \bm{0}, \bm{\Sigma} \right)$ and the clutter covariance matrix $\bm{\Sigma}$ given by \eqref{eq: sig} as the benchmark. Other parameters are the same.
As shown by \Figref{fig:compareAMF}, the AMFs utilising Riemannian geometric means and medians yield almost the same detection performance regardless of the number of data and better performance than the AMF via the SCM.

\begin{figure}[htbp]
    \begin{minipage}{\hsize}
        \centering
        \includegraphics[width=\hsize]{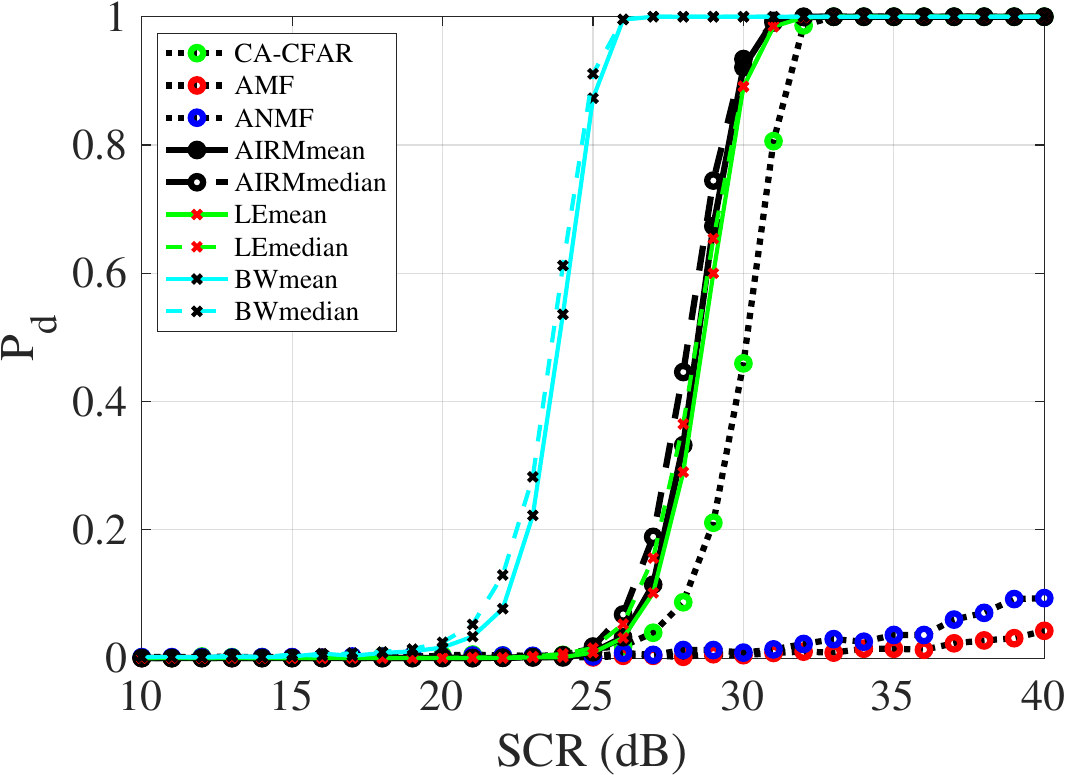}
        \subcaption{$m=N$}\label{fig:m=8gamma}
    \end{minipage}
    \begin{minipage}{\hsize}
        \centering
        \includegraphics[width=\hsize]{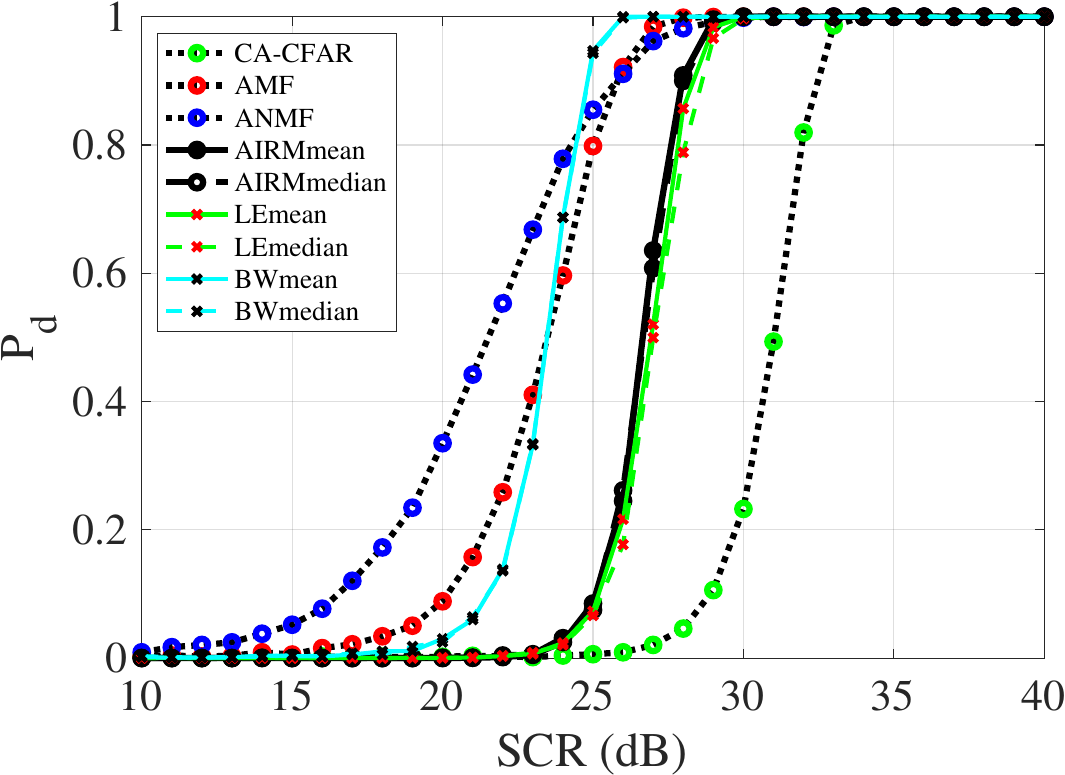}
        \subcaption{$m=2N$}\label{fig:m=16gamma}
    \end{minipage}
    \begin{minipage}{\hsize}
        \centering
        \includegraphics[width=\hsize]{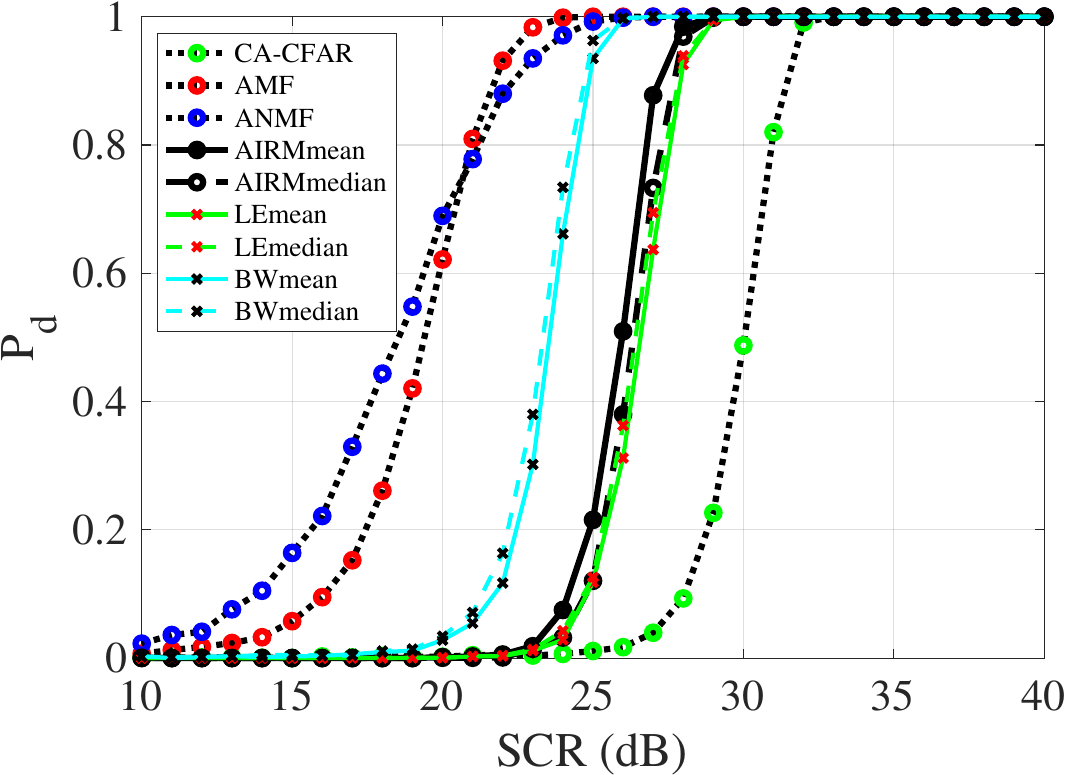}
        \subcaption{$m=3N$}\label{fig:m=24gamma}
    \end{minipage}
    \caption{SCR versus $P_d$ in the cases of $m=N, 2N, 3N$ and  $f_d = 0.2$ in the scenario with the ideal steering vector as the target.}
    \label{fig: DPforCG}
\end{figure}

\begin{figure}[htbp]
    \begin{minipage}{\hsize}
        \centering
        \includegraphics[width=\hsize]{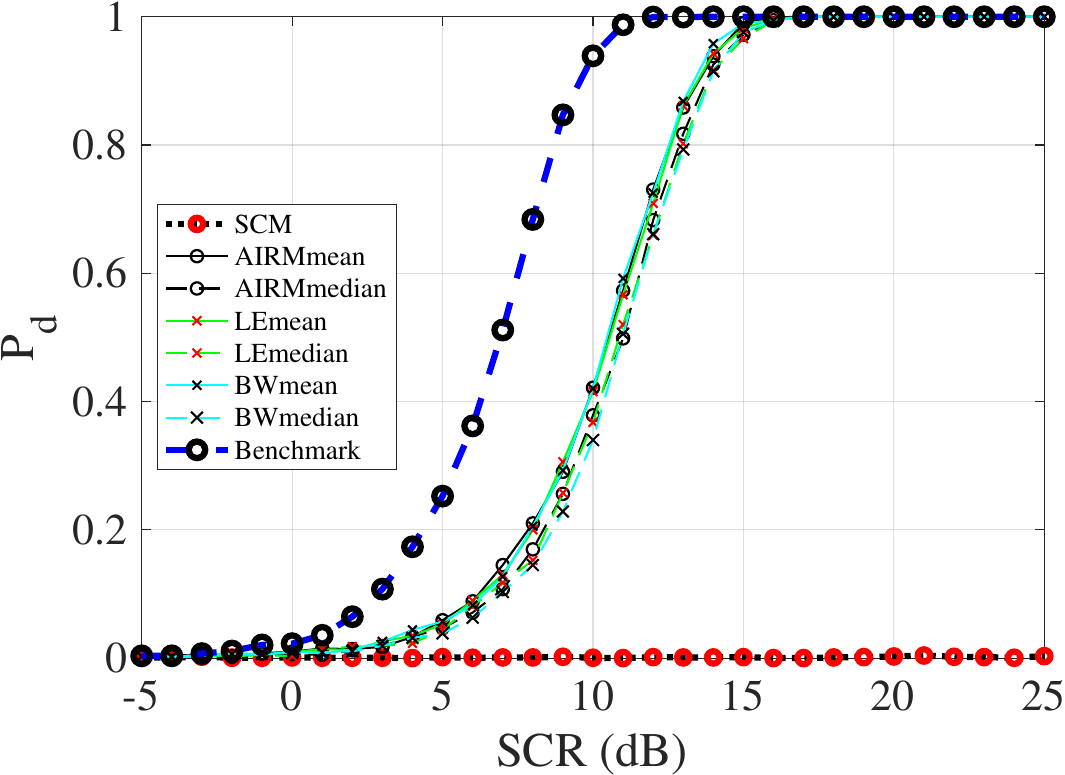}
        \subcaption{Gaussian clutter, $m=N$}\label{fig:m=8AMF}
    \end{minipage}
    \begin{minipage}{\hsize}
        \centering
        \includegraphics[width=\hsize]{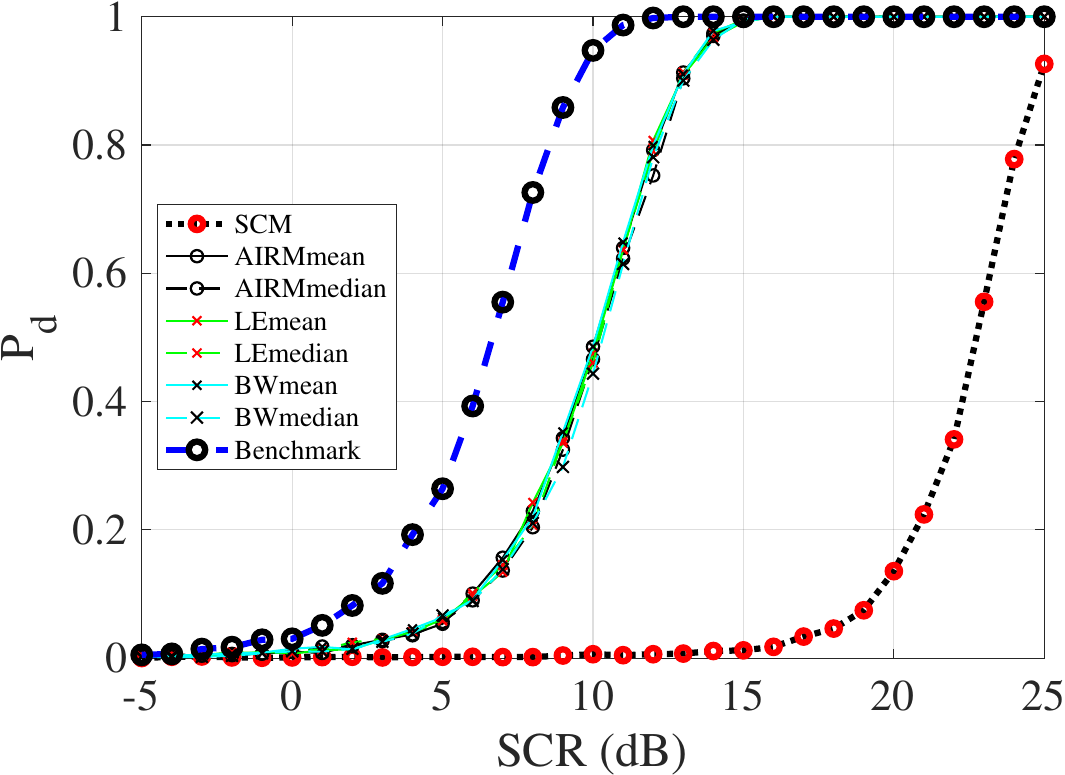}
        \subcaption{Gaussian clutter, $m=2N$}\label{fig:m=16AMF}
    \end{minipage}
    \begin{minipage}{\hsize}
        \centering
        \includegraphics[width=\hsize]{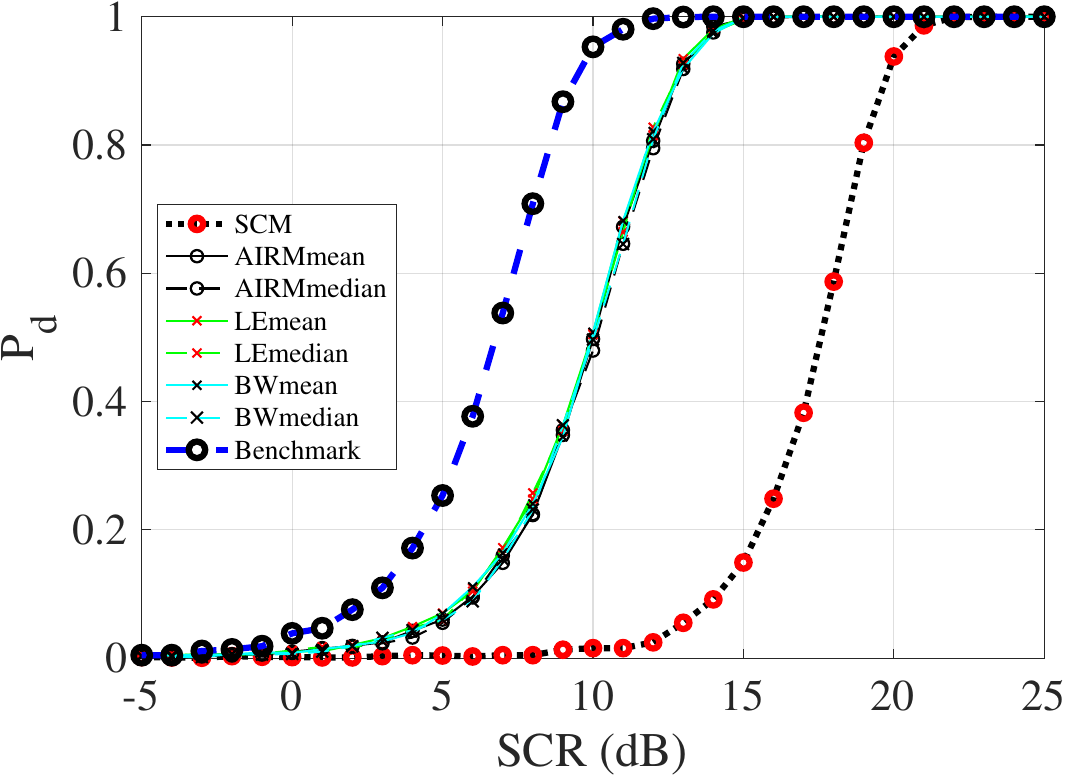}
        \subcaption{Gaussian clutter, $m=3N$}\label{fig:m=24AMF}
    \end{minipage}
\caption{SCR  versus $P_d$ in AMF replacing SCM by Riemannian geometric means and medians.}\label{fig:compareAMF}
\end{figure}

 Furthermore, to show the influence of  targets with different normalized Doppler frequencies $f_d$ on detection performance, the following simulations are conducted in the cases that clutter is modelled as the compound-Gaussian clutter, the SCR is $25$dB, $f_d$ varies in $[0,1]$, while the other parameters are kept as the same.
 \Figref{fig: PdvsFd} illustrates that it is relatively more difficult to detect the target with $f_d $ closer to the normalized Doppler frequency of clutter $f_c= 0.2$ (noticing the periodicity of frequency).  
In the matrix-CFAR, the BW distance can capture the difference between the target and clutter better compared with other distance functions.

\begin{figure}[htbp]
    \begin{minipage}{\hsize}
        \centering
        \includegraphics[width=\hsize]{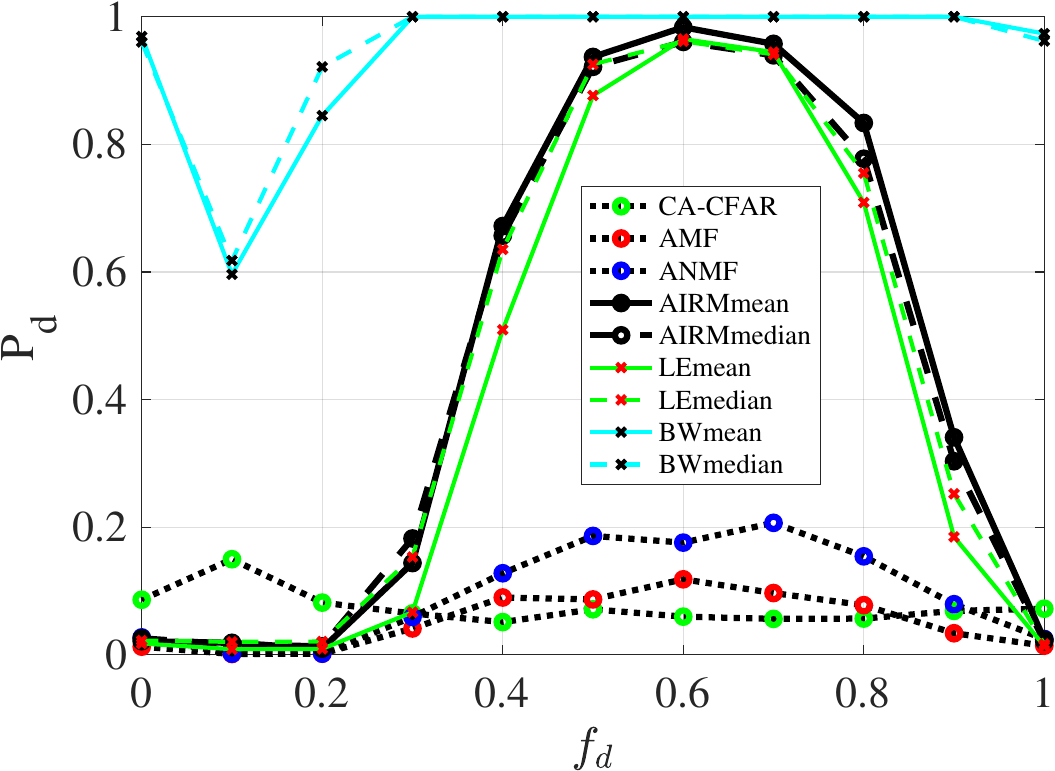}
        \subcaption{$m=N$}\label{fig:m=8gammaPdvsFd}
    \end{minipage}
    \begin{minipage}{\hsize}
        \centering
        \includegraphics[width=\hsize]{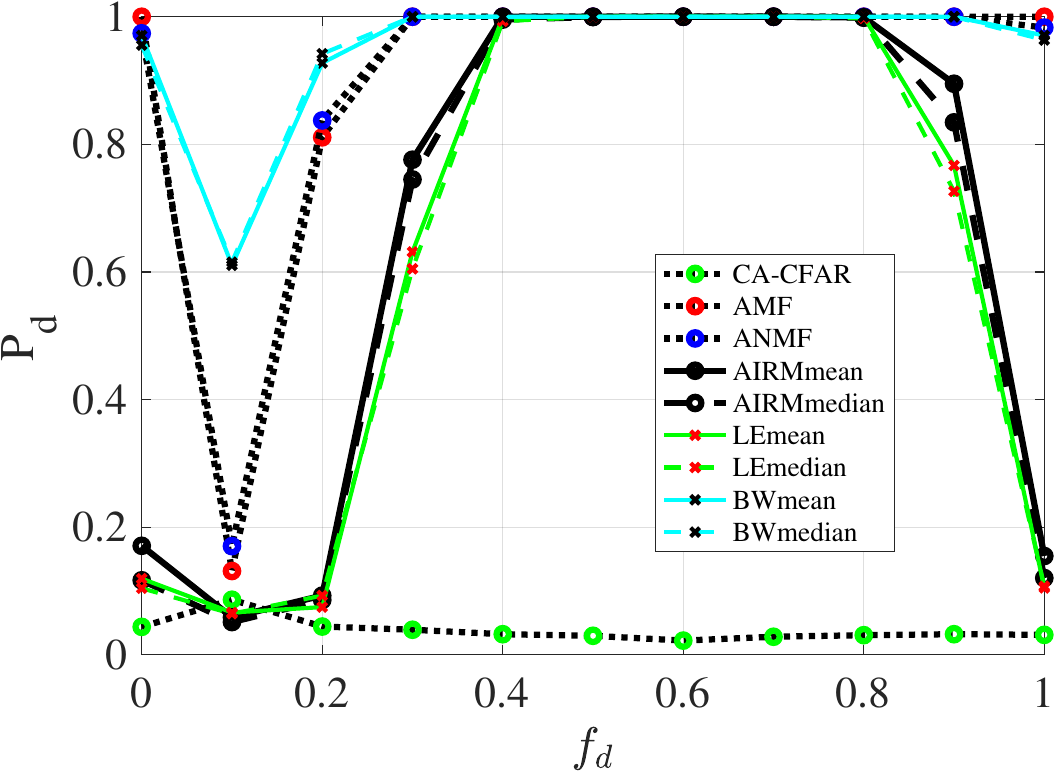}
        \subcaption{$m=2N$}\label{fig:m=16gammaPdvsFd}
    \end{minipage}
    \begin{minipage}{\hsize}
        \centering
        \includegraphics[width=\hsize]{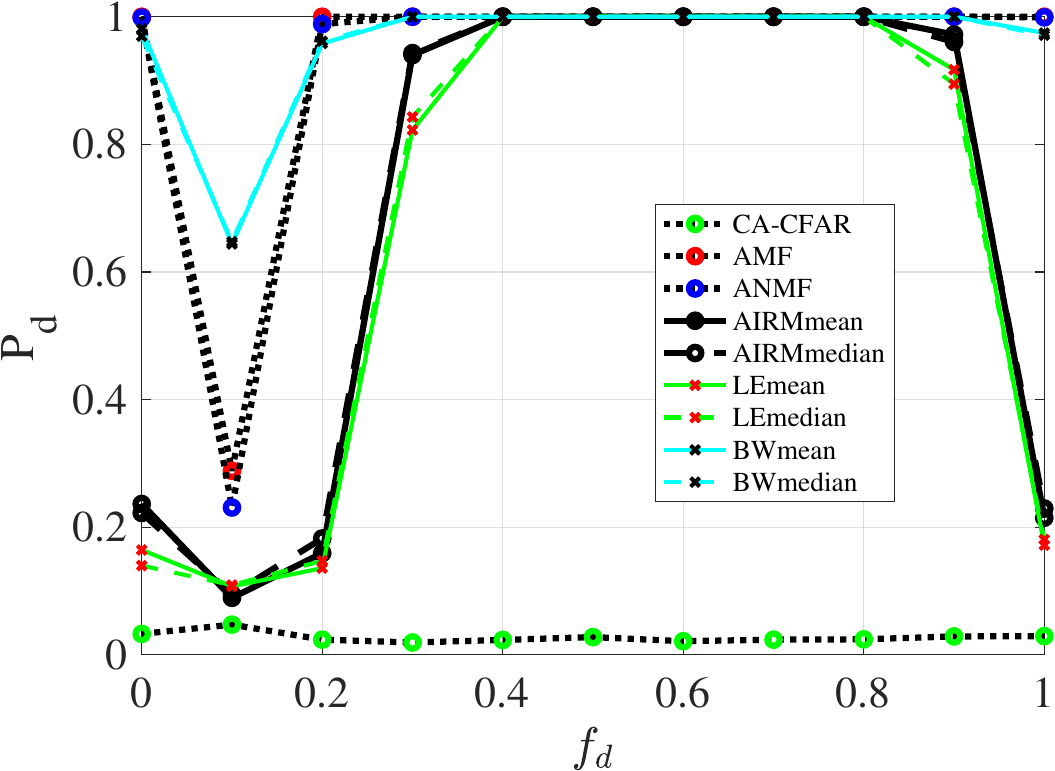}
        \subcaption{$m=3N$}\label{fig:m=24gammaPdvsFd}
    \end{minipage}
    \caption{ Normalized Doppler frequency $f_d$ of target versus $P_d$ in the cases of $m=N, 2N, 3N $.  The SCR is $25$dB.}
    \label{fig: PdvsFd}
\end{figure}

\subsection{Signal Mismatched Scenarios}
In this subsection, signal mismatched scenarios are taken into account. In the simulations, the target signal is given by
\begin{equation}\label{eq: steering vector mismatched}
    \bm{s} = \dfrac{1}{\sqrt{N}}\exp( \operatorname{i} 2\pi) \left( \bm{e}_1 \cos \theta_{\rm{mis}} + \frac{\bm{y}}{\norm{\bm{y}}}\sin \theta_{\rm{mis}} \right),
\end{equation}
where $\bm{e}_1 = \left(1,0,\ldots,0\right)^{\trans}$, $\norm{\cdot}$ denotes the Euclidean norm, and  $\bm{y}$ is an $N$-dimensional vector, orthogonal to $\bm{e}_1$, 
whose components are independent and identically distributed from $CN\left(\bm{0},\bm{I} \right)$. 
The parameter $\theta_{\rm{mis}}$ characterises the steering mismatch 
and denotes the angular difference or cone angle between the nominal and actual spatial signatures \cite{1561887,liu2022multichannel}. 
In other words, it conforms to the following relationship:
\begin{equation}
    \cos^2 \theta_{\rm{mis}} = \frac{\left| \bm{s}^{\hermconj} \bm{e}_1  \right|^2}{\norm{\bm{s}}^2\norm{\bm{e}_1}^2}.
\end{equation}
The cases of signal mismatched scenarios with $\theta_{\rm{mis}} = 1, 15, 30$ are investigated for $m = 3N$. 
The other parameters are the same as in the case when the ideal steering vector is treated as the target.

\Figref{fig: DPformismatch} shows that the detection performance of the matrix-CFAR is robust with respect to mismatched signals.
On the contrary, ANMF cannot detect the mismatched signal even in large SCR.
The detection performance of the AMF is better than that of the ANMF. Robustness of the AMF about mismatched signals is acceptable, but the higher $\theta_{\rm{mis}}$ is the worse the detection performance becomes.

\begin{figure}[htbp]
    \begin{minipage}{\hsize}
        \centering
        \includegraphics[width=\hsize]{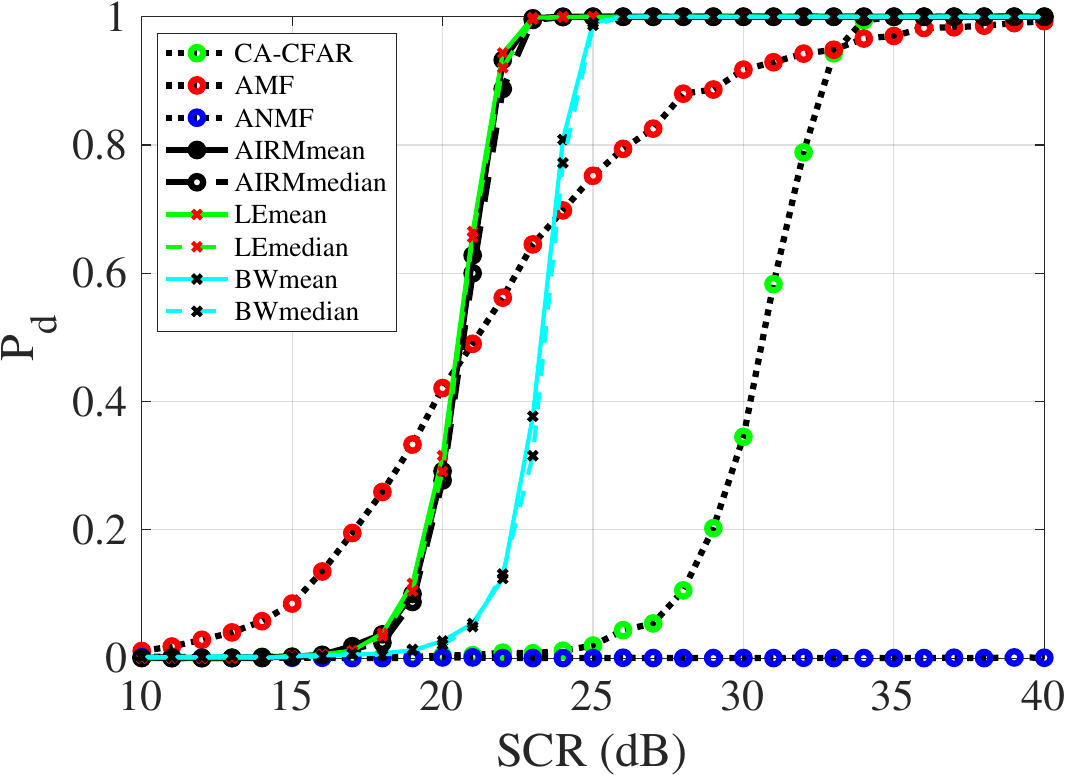}
        \subcaption{$\theta_{\rm{mis}}=1$}\label{fig: theta=1_m=24gamma}
    \end{minipage}
    \begin{minipage}{\hsize}
        \centering
        \includegraphics[width=\hsize]{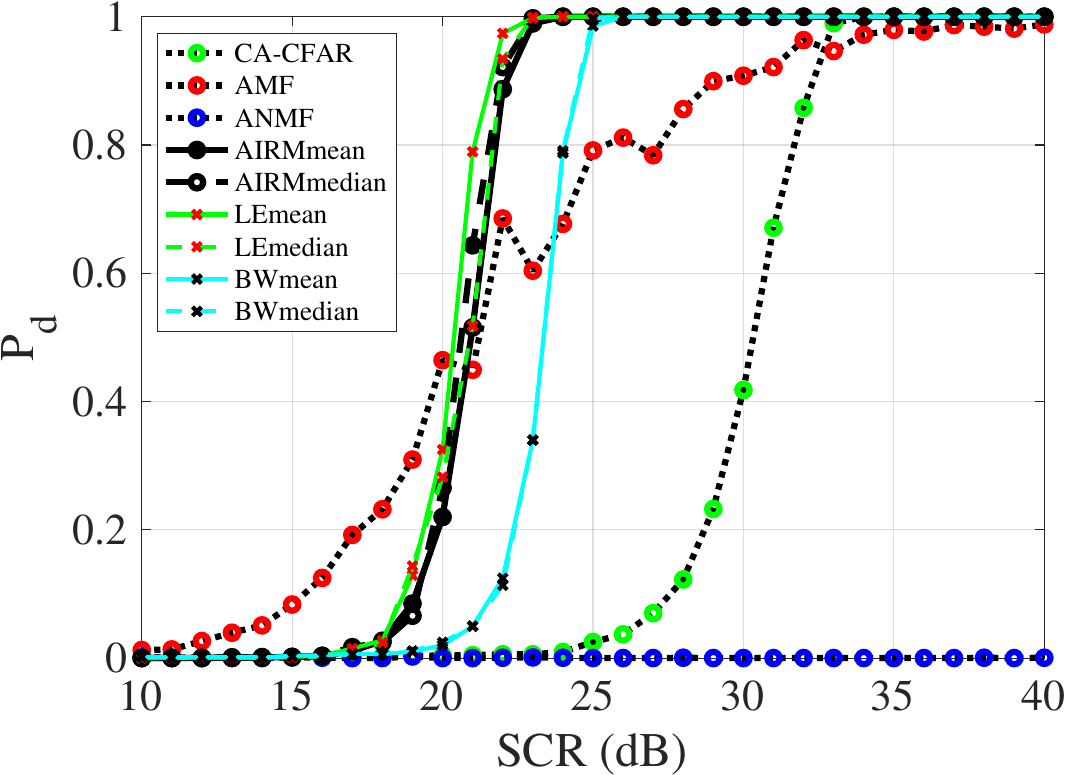}
        \subcaption{$\theta_{\rm{mis}}=15$}\label{fig: theta=15_m=24gamma}
    \end{minipage}
    \begin{minipage}{\hsize}
        \centering
        \includegraphics[width=\hsize]{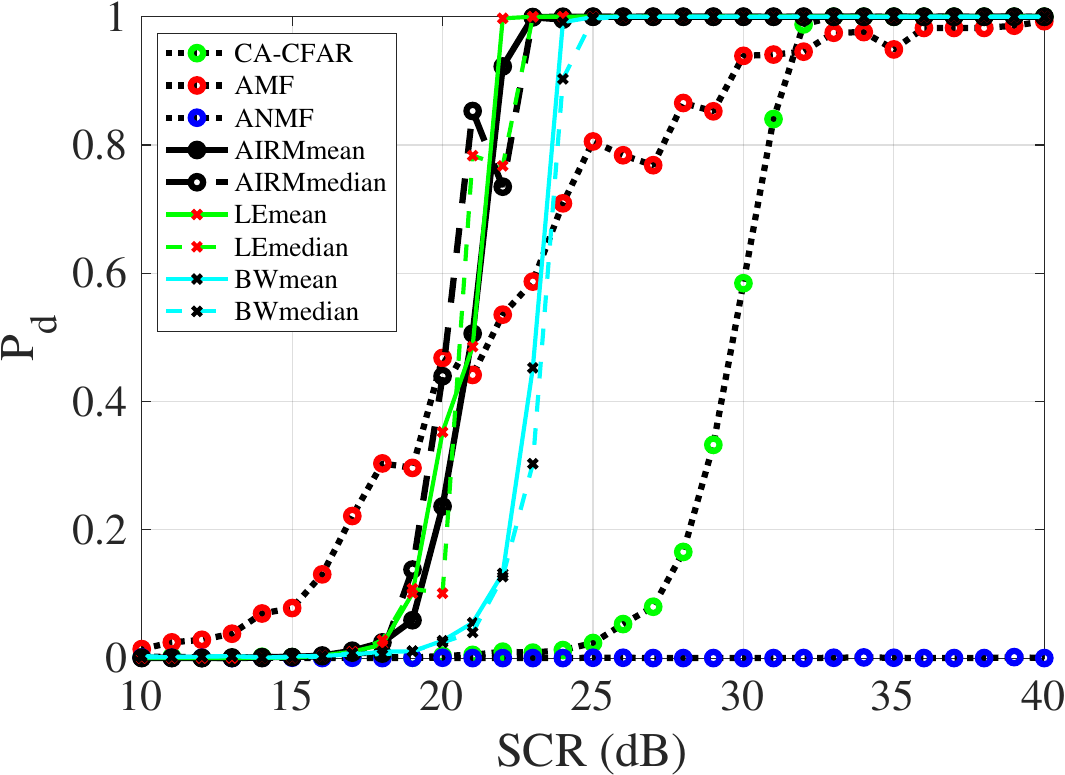}
        \subcaption{$\theta_{\rm{mis}}=30$}\label{fig: theta=30_m=24gamma}
    \end{minipage}
    \caption{SCR  versus $P_d$ in the case of $m=3N$ in signal mismatched scenarios.}
    \label{fig: DPformismatch}
\end{figure}

\section{ROBUST ANALYSIS THROUGH INFLUENCE FUNCTIONS}
\label{sec: RA}
In the application to real problems, the observation data may also contain outliers, thus robustness about the estimator of the autocovariance matrix is of importance. 
The robust analysis against outliers through the influence function has been commonly used \cite{myarticle,10057006}. 
In this section, we review the definition of the influence function and conduct the robustness analysis. 

Let $\overline{\bm{R}}$ be the Riemannian geometric mean or median of given $m$ HPD matrices $\{ \bm{R}_{i} \}_{i=1}^m$, and let $\widehat{\bm{R}}$ be the Riemannian geometric mean or median of the contaminated HPD matrices containing $n$ outliers $\{ \bm{P}_j\}_{j=1}^n$. 
By considering the outliers as a perturbation $\varepsilon \left( \varepsilon \ll 1 \right)$ of the matrix $\overline{\bm{R}}$, $\widehat{\bm{R}}$ can be rewritten as 
\begin{equation}
	\widehat{\bm{R}} = \overline{\bm{R}} + \varepsilon \bm{H} \left( \{ \bm{R}_i\}_{i=1}^m,\{ \bm{P}_j\}_{j=1}^n \right) + \mathcal{O} \left( \varepsilon^2\right),
\end{equation}
where $ \bm{H} \left( \{ \bm{R}_{i}\}_{i=1}^m,\{ \bm{P}_j\}_{j=1}^n \right) $ is a Hermitian matrix and depends on the given HPD matrices $\{ \bm{R}_{i}\}_{i=1}^m$ and the outliers $\{ \bm{P}_j\}_{j=1}^n$.
We define the function 
\begin{equation*}\label{eq: inff}
    f\left( \{ \bm{R}_i\}_{i=1}^m,\{ \bm{P}_j\}_{j=1}^n \right) \coloneqq\dfrac{\normf{\bm{H}\left( \{ \bm{R}_i\}_{i=1}^m,\{ \bm{P}_j\}_{j=1}^n \right)}}{\normf{\overline{\bm{R}}}} 
\end{equation*}
as the {\it influence function}.

We consider the mean case here and the median case is similar. 
To compute the corresponding influence functions, the objective function $F(\bm{R})$ of $\{ \bm{R}_{i}\}_{i=1}^m$ and $\{ \bm{P}_j\}_{j=1}^n$ is defined by
\begin{equation*}
	F(\bm{R}):= (1-\varepsilon)\frac{1}{m}\sum_{i=1}^{m} d^2\left(\bm{R}_i, \bm{R} \right) + \varepsilon\frac{1}{n}\sum_{j=1}^{n} d^2\left(\bm{P}_j,\bm{R} \right),
\end{equation*}
where $d$ is the distance function for each mean. For medians, the discrete `integrand' is replaced by $d$.
Since $\bm{\widehat{R}}$ is the solution of $\riegrad F(\bm{R}) = \bm{0}$, we obtain 
\begin{equation}
    \begin{aligned}
		& (1-\varepsilon) \frac{1}{m}\sum_{i=1}^{m} \riegrad d^2\left(\bm{R}_i, \bm{\widehat{R}} \right) \\ 
        & ~~~~~ +  \varepsilon\frac{1}{n}\sum_{j=1}^{n} \riegrad d^2\left(\bm{P}_j, \bm{\widehat{R}} \right) = \bm{0}.
    \end{aligned}
\end{equation}
Differentiating $\riegrad F(\widehat{\bm{R}})=\bm{0}$ about $\varepsilon$ at $\varepsilon=0$ , a matrix equation with respect to $\bm{H}$ is derived by 
\begin{equation}\label{eq:inf0}
    \begin{aligned}
        & \frac{1}{m}\frechetderi \left( \sum_{i=1}^{m} \riegrad d^2\left(\bm{R}_i,\bm{\widehat{R}} \right) \right) \\ 
        & ~~~~~~ + \frac{1}{n}\sum_{j=1}^{n} \riegrad d^2\left(\bm{P}_j,\bm{\widehat{R}} \right) = \bm{0} .
    \end{aligned}
\end{equation}
An orthonormal basis of Hermitian matrices leads to the formulation of the matrix equation \eqref{eq:inf0} as a linear system equation.
Details can be found in \cite{onoTBDmedian,onomasterthesis}.

In the next simulations, $m$ number of $N$-dimensional data are sampled from $CN \left(\bm{0}, \bm{\Sigma} \right)$ given by Section \ref{sec: DP}. 
The matrix $ \overline{\bm{R}}$ are computed from the $m$ autocovariance matrices $\{\bm{R}_i\}_{i=1}^m$ of the sampling data. 
Next, $n$ outliers $\{ \bm{P}_j\}_{j=1}^n$ are mixed with $\{\bm{R}_{i}\}_{i=1}^m$ and are modelled as the autocovariance matrix of the signal, $ a \bm{s} + \bm{c}$, where $\bm{s}$ is the target in \eqref{eq: target}, $\bm{c}$ is the Gaussian clutter of $CN \left(\bm{0}, \bm{\Sigma} \right)$, 
and $a$ is calculated from the SCR which is 40 dB.
The numerical simulations are conducted in the setting of $m$ chosen as 50 and $n$ ranging from 1 to 40.

\Figref{fig: Influencefunction} shows the influence functions computed by averaging 1000 times and displayed in the log scale. 
The lower the influence functions are, the more robust the detector is.
Therefore, from \Figref{fig: Influencefunction}, the AIRM mean and the BW mean are robust among all Riemannian geometric means and medians.
Considering the detection performance and computational complexity, the BW mean is the most effective estimator among all Riemannian geometric detectors studied here.

\begin{figure}[htbp]
	\centering
	\includegraphics[width = \linewidth]{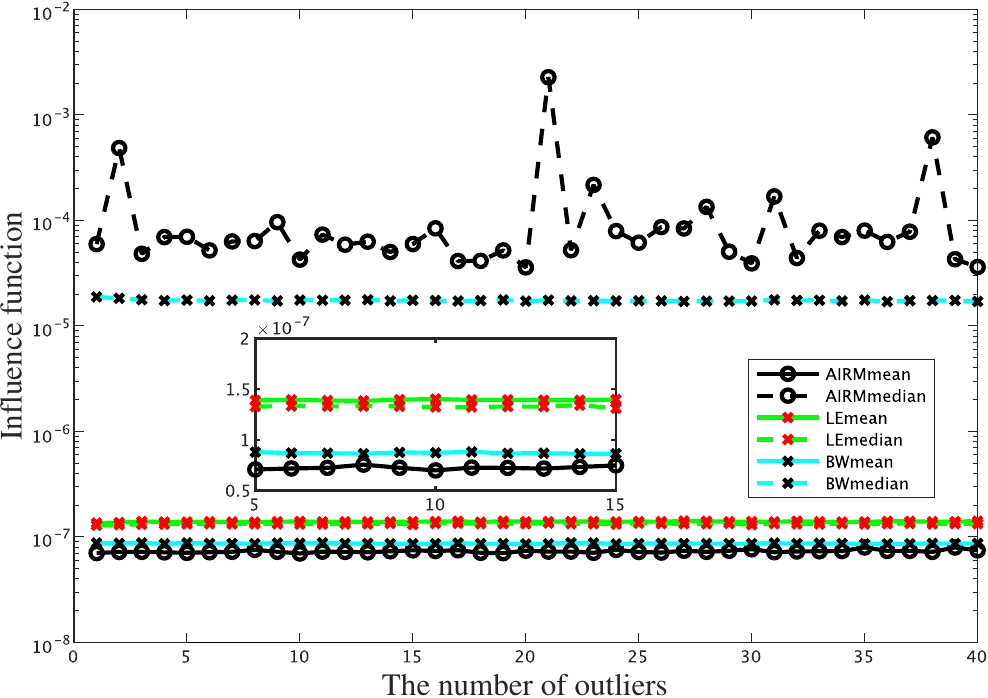}
	\caption{Influence functions of the Riemannian geometric means and medians.}
	\label{fig: Influencefunction}
\end{figure}


\section{CONCLUSION}
\label{sec: con}
The BW metric defined in HPD manifolds was applied to the matrix-CFAR detector and compared with other Riemannian metrics, including the AIRM and the LE metric, as well as the conventional AMF and ANMF. To derive the BW mean and median, Riemannian gradient descent algorithms were proposed and their computational complexity was investigated. 
It was shown that the Riemannian gradient descent algorithm was more convergent than the fixed-point algorithms which have been proposed elsewhere.

Numerical simulations of the scenario using the ideal steering vector as the target showed that the matrix-CFAR with the BW metric outperformed all other detectors when the number of observation data is limited.
Additionally, it was observed that the detection performance of the AMF can be improved by replacing the SCM with the Riemannian geometric means and medians. 
In the signal mismatched scenario, it was noted that the matrix-CFAR is more robust than the AMF and ANMF.
Robustness analysis and simulations indicated that the BW mean is among the most robust ones to outliers, of a similar order as the AIRM mean. 

As future research,  we plan to explore the practical applications of the matrix-CFAR technique on real-world field data. 
Secondly, Gaussian distributions for HPD matrices can be established through a symplectic model introduced by \cite{barbaresco2022symplectic, barbaresco2021gaussian}, which may be applied in determining the threshold of matrix-CFAR.
It would also be interesting to investigate other estimators for the autocovariance matrix, such as structured autocovariance interference \cite{COP2016} and the diagonal loading (e.g., \cite{8450037,4383590}).

\section*{APPENDIX}
In this appendix, we show the equivalence of the equations \eqref{eq: Ap1} and \eqref{eq: Ap2}.
The Sylvester equation reads
    \begin{equation*}
        \bm{A} \bm{X} + \bm{X} \bm{B} = \bm{C},
    \end{equation*}
    where $\bm{A} = (a_{ij})$ is an $m\times m$ matrix, $\bm{B}$ is an $n\times n$ matrix, $\bm{C}$ is an $m\times n$ matrix which are given and the $m\times n$ matrix $\bm{X}$ is the unknown.
    It can be rewritten as  
    \begin{equation*} 
        \left( \bm{I}_m \otimes \bm{A} + \bm{B}\trans \otimes \bm{I}_n \right) \vect \bm{X} = \vect \bm{C},
    \end{equation*}
    where $\bm{I}_m$ and $\bm{I}_n$ are the $m$- and $n$-dimensional identity matrices, the Kronecker product is $\bm{A} \otimes \bm{B} = \left\{ a_{ij} \bm{B} \right\}_{i,j= 1}^m$ and the vectorization operator of a matrix is denoted by $\vect$  which stacks the columns of a matrix into one long vector.
    Assuming that $ m = n $ and $\bm{C} = \bm{0}$, the Sylvester equation becomes
    \begin{equation} \label{eq : vecSylvester}
        \left( \bm{I}_n \otimes \bm{A} + \bm{B}\trans \otimes \bm{I}_n \right) \vect \bm{X} =0.
    \end{equation}
    Let $ \operatorname{eig} \left( \bm{A} \right) = \left\{ \lambda_i \right\}_{i = 1}^n $ and $ \operatorname{eig} \left( \bm{B} \right) = \left\{ \mu_i \right\}_{i = 1}^n $. 
    The eigenvalues of the coefficient matrix are 
    \begin{equation*}
        \operatorname{eig} \left( \bm{I}_n \otimes \bm{A} + \bm{B}\trans \otimes \bm{I}_n \right) = \left\{ \lambda_i + \mu_j \mid i,j = 1, \ldots, n \right\}.
    \end{equation*}
    Now we consider the equation \eqref{eq: Ap1}, namely
    \begin{equation} \label{eq : gradLyap}
        \sum\limits_{i=1}^{m} \bm{R} \left( \bm{I} - \bm{R}_i \# \bm{R}^{-1} \right) + \sum\limits_{i=1}^{m} \left( \bm{I} - \bm{R}_i \# \bm{R}^{-1} \right) \bm{R} = \bm{0}.
    \end{equation}
    Plugging $ \bm{A} = \bm{R}, \bm{B} = \bm{R} $ and $\bm{X} = \sum\limits_{i=1}^{m} \left( \bm{I} - \bm{R}_i \# \bm{R}^{-1} \right) $ into the equation \eqref{eq : vecSylvester}, we have
    \begin{equation*} \label{eq : vecgrad}
        \left( \bm{I}_n \otimes \bm{R} + \bm{R}\trans \otimes \bm{I}_n \right) \vect \left( \sum\limits_{i=1}^{m} \left( \bm{I} - \bm{R}_i \# \bm{R}^{-1} \right) \right) = \bm{0}.
    \end{equation*}
    Then the eigenvalues of the coefficient matrix are given by
    \begin{equation*}
        \operatorname{eig} \left( \bm{I}_n \otimes \bm{R} + \bm{R}\trans \otimes \bm{I}_n \right) = \left\{ \lambda'_i + \lambda'_j \mid i,j = 1, \ldots, n \right\} ,
    \end{equation*}
    where $\lambda'_i $ are the eigenvalues of the matrix $\bm{R}$. Since the matrix $\bm{R}$ is positive definite, the matrix $ \bm{I}_n \otimes \bm{R} + \bm{R}\trans \otimes \bm{I}_n $ is also positive definite.
    Consequently, the equation \eqref{eq : gradLyap} becomes
    \begin{equation*} 
        \sum\limits_{i=1}^{m} \left( \bm{I} - \bm{R}_i \# \bm{R}^{-1} \right) = \bm{0}.
    \end{equation*}

\bibliographystyle{IEEEtran}
\bibliography{mybibfile}

%

\end{document}